\documentclass[american,aps,prx,reprint,floatfix,nofootinbib,superscriptaddress,longbibliography]{revtex4-2}

\usepackage{graphicx}
\usepackage{dcolumn}
\usepackage{bm}
\usepackage{amsmath}
\usepackage{amssymb}
\usepackage{courier}
\usepackage{cprotect}
\usepackage{amsthm}
\usepackage{enumerate}
\usepackage{physics}
\usepackage{soul}
\usepackage{bbm}
\usepackage[dvipsnames]{xcolor}
\usepackage[hyperfootnotes=false]{hyperref}
\usepackage{caption}
\usepackage[capitalise]{cleveref}
\usepackage{float}

\newtheorem{theorem}{Theorem}[section]
\newcommand{\fp}{\mathcal{F}_{\text{proc}}}

\newif\ifshowcomments
\showcommentsfalse
\usepackage{xcolor}

\begin{document}

\title{Learning error suppression strategies for dynamic quantum circuits}
\author{Christopher Tong}
\email{lctong@mit.edu}
 \altaffiliation{current: Princeton University}
 \affiliation{\vspace{-1pt}Department of Physics, Massachusetts Institute of Technology, Cambridge, MA 02139, USA}
 \affiliation{IBM Quantum, IBM Thomas J. Watson Research Center, Yorktown Heights, NY 10598, USA}
\author{Liran Shirizly}
 \affiliation{IBM Quantum, IBM Thomas J. Watson Research Center, Yorktown Heights, NY 10598, USA}
\author{Edward H. Chen}
 \altaffiliation{current: Microsoft Discovery \& Quantum}
 \affiliation{IBM Quantum, Research Triangle Park, NC 27709, USA}
\author{Derek S. Wang}
\altaffiliation{current: independent researcher}
 \affiliation{IBM Quantum, IBM Thomas J. Watson Research Center, Yorktown Heights, NY 10598, USA}
\author{Bibek Pokharel}
 \affiliation{IBM Quantum, IBM Thomas J. Watson Research Center, Yorktown Heights, NY 10598, USA}
\date{\today}

\begin{abstract}
Dynamic quantum circuits integrate unitary evolution with mid-circuit measurement and feedforward, enabling conditional operations essential for efficient quantum algorithms and foundational for fault-tolerant quantum computation. However, such operations introduce measurement-induced errors and control constraints. Here, we introduce an empirical learning framework that optimizes dynamical decoupling (DD) sequences for dynamic circuits at the level of circuit subintervals and qubit subregisters. Applying empirically learned DD sequences, we achieve a two- to three-fold reduction in dynamic circuit error rates, as measured via randomized benchmarking, relative to undecoupled circuits. We apply the learned strategies to the dynamic circuit implementation of the quantum Fourier transform with measurement (QFT+M), demonstrating nontrivial process fidelity on connected chains of up to 20 qubits. Applying the resulting enhancement, we perform a high signal-to-noise QFT immediately following the preparation of a 10-qubit entangled state. Our results demonstrate that empirically optimized DD sequences systematically outperform canonical, theoretically derived DD sequences for dynamic circuits. We establish empirical DD learning as an efficient approach for error suppression in dynamic quantum circuits, with direct relevance to applications requiring measurement and feedback such as quantum error correction.
\end{abstract}

\maketitle
\section{Introduction}
\label{sec:intro}
Dynamic quantum circuits, where unitary gates are interleaved with mid-circuit measurements (MCMs), feedforward (FF) of measurement results, and classical control~\cite{Joz05, Cor21, Pin21, Mos23}, define a hybrid execution model for quantum processors. These non-unitary circuits enable mid-circuit error detection and correction, which are key to realizing fault-tolerant quantum computation~\cite{Sho97, Ter15}, novel quantum dynamics~\cite{Ski19,Fis23,Iva24}, and efficient state preparation schemes~\cite{Smi23,Ala24,Pue24,Smi24,Bhu25, Jeo25,zi2025constantdepthquantumcircuitsarbitrary}. To these ends, dynamic circuits are now routinely implemented on quantum processors across platforms, establishing measurement-plus-unitary computation as a framework for quantum information processing~\cite{Lut22, Koh23,Tan23, Fos23, Bau23, Dec23,Sin23, Bau24, Mal24, Buh24, Vaz24, Che25,Goo25, Pok25}.

However, the expanded computational capabilities afforded by dynamic circuits come with new sources of error absent in unitary-only execution. In addition to readout errors, dynamic circuits are susceptible to decoherence of idle qubits during MCMs, which typically have durations much longer than those of unitary gates~\cite{Cra16,Wal17,Pin21,Gra23,Mau23}, as well as to coherent single-qubit and crosstalk errors generated by the measurement process~\cite{Gov23,Shi24}. These mechanisms motivate the development of error suppression, detection, and mitigation strategies tailored for dynamic circuits~\cite{Niu24,Gup24,Shi25,santos2025driftresilientmidcircuitmeasurementstate,shirgure2026characterizingbenchmarkingdynamicquantum,shirgure2026errormitigationdynamiccircuits}.

Dynamical decoupling (DD) applies sequences of unitary control pulses to qubits to provide low-overhead suppression of coherent errors, effectively averaging away unwanted system–bath interactions~\cite{Vio98,Vio99,Vit99,Zan99,Vio03, Gen17, Zho23,Shi24dissipative, Bro24}. DD has been well-established in error-suppression across a wide range of quantum algorithms and physical platforms~\cite{Bie09, Wan12, Duj09, Fur16, Del10, Ezz22, Pok18, Pok23}. Recent experimental studies indicate that the benefits of DD extend to dynamic circuits, where appropriately applied sequences can suppress errors during mid-circuit measurements~\cite{Bau23,Bau24,Shi24,Hot25,shirgure2026errormitigationdynamiccircuits}. However, these approaches typically rely on locally fixed DD sequences. As a result, such approaches lack the ability to tune DD sequences to qubit-dependent noise, circuit structure, or problem-specific constraints, despite growing evidence that tailored and learned DD protocols can yield substantial performance gains~\cite{Ton24,Sei24,Coo24,Rah24}. This limitation is particularly acute in dynamic circuits, where experiments across multiple platforms have shown that MCM implementations induce spatially structured errors on nearby qubits~\cite{Gae21, Mos23, Hot25}, with both the strength and character of these errors depending on the specific qubit being measured~\cite{Gov23,Shi24,Zho25}.

Here, we demonstrate an empirical protocol for learning multi-qubit dynamical decoupling (DD) sequences tailored to errors induced by mid-circuit measurements in dynamic circuits. By optimizing DD sequences directly on hardware, our approach is efficient and scalable to large circuit sizes without requiring full process tomography or exponential classical resources. Central to our approach is a spatiotemporal motif structure derived from the physical topology of mid-circuit measurements, which induces errors that are both qubit-specific and layer-dependent — a structure that generic, qubit-agnostic DD strategies do not capture. We show that the resulting learned sequences effectively suppress dynamic-circuit errors in randomized benchmarking experiments while remaining compatible with multi-qubit operation. We then apply the protocol to the quantum Fourier transform with measurement (QFT+M), where under learned DD control, we observe enhanced process fidelity on connected chains of up to 20 qubits. Counterfactual experiments in which learned sequences are applied without qubit specificity, or permuted among subregisters, confirm that this improvement stems from aligning optimization with spatiotemporal circuit structure. Finally, we apply QFT+M under learned DD control acting on a family of 10-qubit Greenberger–Horne–Zeilinger (GHZ) states and observe enhanced signal fidelity and stable many-body interference patterns in a regime that provides a stringent probe of correlated measurement-induced errors. Together, these results establish empirical DD learning as a practical and scalable strategy for suppressing measurement-induced errors in dynamic quantum circuits.

We introduce our DD learning protocol for dynamic circuits in Sec.~\ref{sec:dynamic_circuit_intro} and evaluate the effectiveness of the resulting sequences using mid-circuit measurement and dynamic circuit randomized benchmarking protocols~\cite{Gov23,Shi24} (Sec.~\ref{sec:application_and_benchmarking}). We demonstrate an application to QFT+M in Sec.~\ref{sec:dd_for_qftm} and high signal-to-noise QFT+M of GHZ states in Sec.~\ref{sec:qft_ghz_results}, and conclude in Sec.~\ref{sec:conclusions}.

\begin{figure*}[ht]
	\centering
	\includegraphics[width=\linewidth]{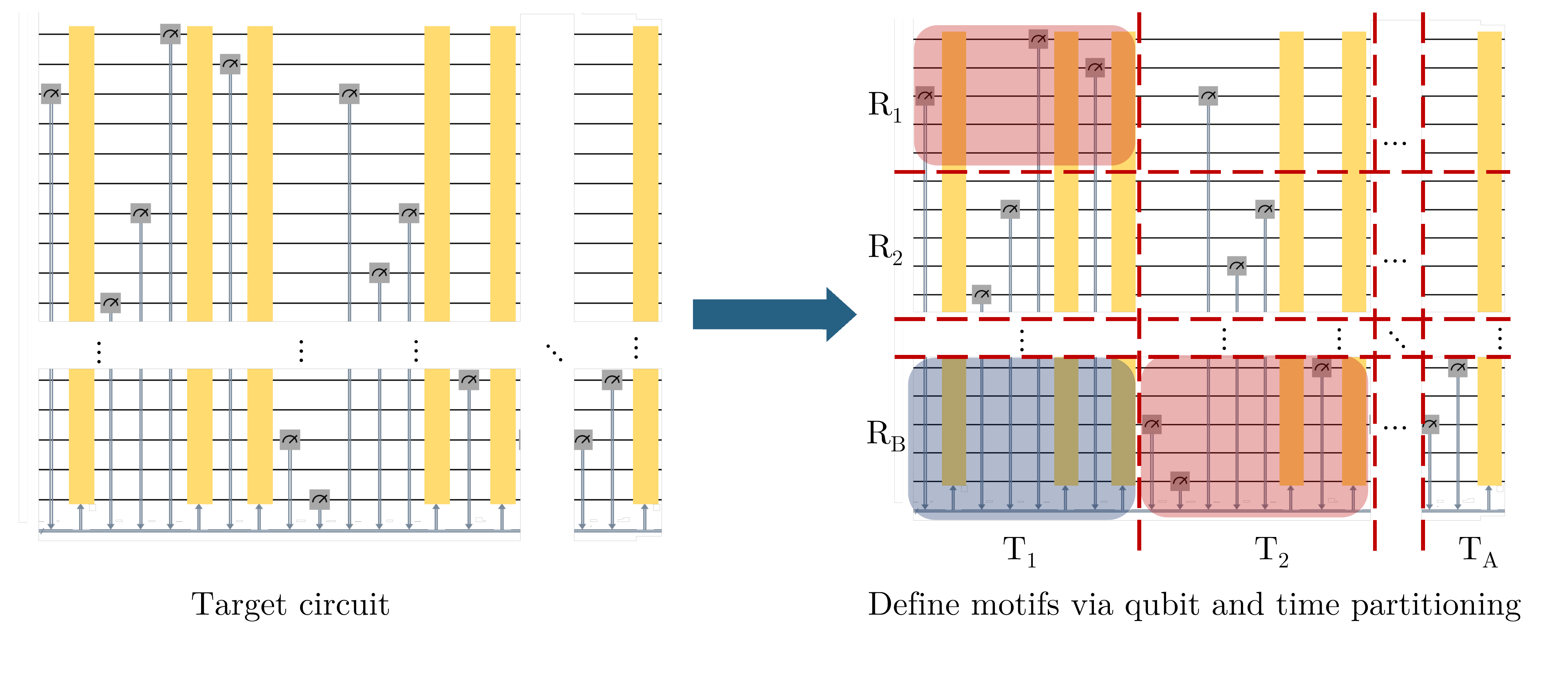}
	\caption{(left) A general dynamic circuit consists of repeated layers where some subset of ``measured'' qubits receive MCMs (gray) while the remaining ``unitary'' qubits remain idle. Conditional on the measurement result, subsequent operations (yellow) are then applied. These dynamic circuits allow for alternative constructions for quantum state evolution as realized in ``traditional'' quantum circuits, where purely unitary state evolution is followed by a final measurement. (right) To efficiently learn optimal DD sequences for the large circuit containing many distinct MCMs, we partition the circuit into $A$ temporal intervals and $B$ qubit registers and treat each qubit-time partition, such as $(T_1, R_1)$ and $(T_2, R_B)$ highlighted in red, for optimization with GADD. In contrast, motifs such as $(T_1, R_B)$ highlighted in blue have no MCMs and are excluded from training for efficiency.}
	\label{fig:dynamic_motif_generation}
\end{figure*}
\section{Dynamic circuits and empirical DD learning}
\label{sec:dynamic_circuit_intro}

Dynamic circuits are composed of layers of MCM and FF operations, as well as conditional unitary operations (Fig.~\ref{fig:dynamic_motif_generation}). In each MCM and FF layer, we categorize circuit qubits as either ``measured'' or ``unitary''; i.e., they either do or do not experience MCMs. Noise suppression in dynamic circuits is particularly important as dynamic circuit noise can lead to unintended conditional operations~\cite{Koh26}, which can quickly propagate due to the non-local interactions mediated by mid-circuit measurement and feedback~\cite{Bau23}. However, finding the exact DD sequence for optimal error suppression is difficult for multiple reasons. First, as DD sequences rely on balancing terms in the error Hamiltonian between forward and reverse time evolution, corresponding to commuting and anticommuting DD pulses respectively, one requires complete knowledge of the temporally- and spatially-correlated noise spectrum to design DD sequences from first principles. This challenge is exacerbated by sensitivity to environmental noise~\cite{Ton24, Sei24} and quantum control restrictions on real quantum devices~\cite{Byl11, Kra19, Zha22_2}, such as finite pulse widths, discrete device timings, and systematic gate errors. Even with an exact, ideal error model of the time- and qubit-correlations of all errors on a quantum device, it is combinatorially challenging to determine optimal DD sequences for each qubit on a many-qubit device~\cite{Zan99,Lid14, Cho20, Zhou23, Zhou24}. 

Beyond the aforementioned challenges, MCM-induced errors in dynamic circuits, which occur in all leading quantum computational platforms due to readout operations feeding back on adjacent device qubits~\cite{Gae21, Mos23, Gov23, Shi24, Hot25}, make \textit{a priori} determination of theoretically ideal DD sequences difficult. Recent benchmarking experiments~\cite{Gov23,Shi24,Hot25} indicate that individual device qubits induce distinct errors on neighboring qubits. Thus, as dynamic circuit execution proceeds, the qubit-correlated noise environment varies layer-by-layer~\cite{Sei24}. As a result, DD strategies must be tuned to suppress both spatial and temporal measurement-induced errors.
\begin{figure*}[ht]
	\centering
	\includegraphics[width=\linewidth]{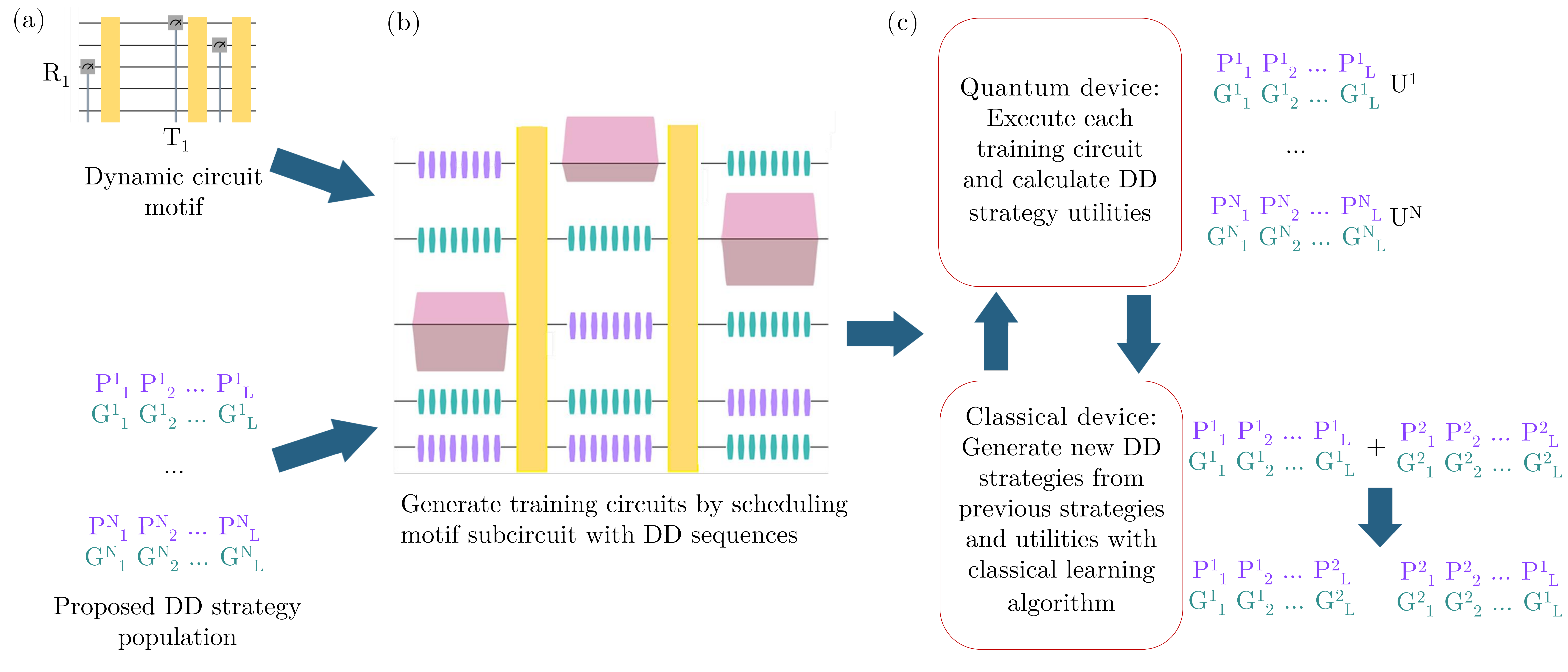}
	\caption{(a) Two distinct sequences (teal and purple) are independently optimized for the depicted DD motif. Although in principle as many colors can be used as desired, we show $k = 2$, which achieves basic crosstalk cancellation on 1D chains~\cite{Zho23}, while increasing $k$ increases the time to convergence. (b) The dynamic circuit is scheduled, with the long measurement instructions (pink) leaving idle periods on unitary qubits for DD sequence placements. Colors are assigned to idle periods according to odd and even parity distance respectively from the nearest measured qubit. (c) After all training circuits are generated and colored, the GADD method presented in Ref.~\cite{Ton24} is used to optimize DD through iteration between calculating utilities $\{U^n\}$ from quantum computational outcomes and determining new DD sequences through classical learning algorithms.}
	\label{fig: dynamic_empirical_learning}
\end{figure*}

Our protocol overcomes the aforementioned challenges by employing classical learning to empirically optimize DD sequences, treating sequence pulses as parameters of a utility function subject to extremization. We use a genetic algorithm~\cite{Mit99} for optimization, based upon previous computational work in DD sequence learning~\cite{Jud92,byrd2003empirical,Qui13,Ton24}. In particular, we define a utility function on empirical circuit performance data with a population of DD sequences on quantum devices and use a classical computer to analyze the results and generate new DD sequences; the specific utility function used in our experiments is described in Appendix~\ref{sec:genetic_algo_training_utility_appendix}. Through this use of quantum-classical feedback, the iterative method converges to DD strategies sensitive to both global and local qubit- and time-dependent coherent noise commensurate with their presence on the target device.

GADD operates on \textit{DD motifs}, subcircuit structures from the target circuit representative of the error environment~\cite{Ton24}, for efficient DD learning. Sequences learned for each subcircuit motif are then applied back to the target circuit, thus allowing sequences to be optimized to errors specific to local quantum operations or local timing and quantum control restrictions. Furthermore, convergence and computational time per iteration would be intractable without breaking sufficiently large circuits into the present motifs.

The spatially localized and temporally layered character of MCM-induced errors directly motivates a corresponding spatiotemporal partitioning of the circuit into DD motifs: by aligning qubit subregisters and time subintervals with the topology of individual measurement operations, learned sequences can be tailored to the locally distinct error environments each MCM induces (Fig.~\ref{fig:dynamic_motif_generation}). These motifs allow parallelized optimization of DD sequences for each subcircuit, addressing the spatiotemporal nature of measurement-induced errors in dynamic circuits. Furthermore, retaining the MCM and FF dynamic circuit structure within each subcircuit with the same logical control flow allows implicit DD sequence tuning to FF times. We note that our learning approach predates the \verb|stretch| functionality introduced in the utility scale dynamic circuit implementation \cite{IBMQuantumDynamicCircuits2025}, which allows for placeholder delays between operations such that relative pulse timings can be controlled without knowing the exact duration of the idle delays. While implemented using exact delays, our protocol can naturally incorporate the \verb|stretch| functionality.

With the motifs identified, we proceed to tuning DD sequences to each motif by applying the protocol shown in Fig.~\ref{fig: dynamic_empirical_learning}. The genetic algorithm generally converges as high-performing DD strategies are both (a) selected for after each iteration and (b) more likely to engage in reproduction~\cite{Qui13}. Upon convergence, the highest utility sequence for the given motif is taken and applied to the respective position within its motif in the original target dynamic circuit, as we demonstrate in experiments on the QFT+M dynamic circuit in Sec.~\ref{sec:dd_for_qftm}. In total, we apply the following steps for DD sequence learning:

\begin{enumerate}[(i)]
    \item Partition the total time into intervals $T_1, T_2, \dots, T_A$, allowing for better tuning of DD sequences to specific MCM-induced errors present in the motif instead of arriving at generic sequences with reasonable average-case performance.
    \item Partition circuit qubits into subregisters $R_1, R_2, \dots, R_B$, allowing the empirical learning algorithm to find distinct sequences that can suppress measurement-induced errors associated with individual measurement operations, while also being sufficiently large to capture various sources of crosstalk and qubit-correlated noise~\cite{Her21, Gov23, Ezz22, Bla21, Hey24}.
    \item Identify motifs $M_{ij} \equiv (T_i, R_j)$ involving MCMs for DD learning and group for GADD parallelization. Two motifs $M_{i_1j_1}$ and $M_{i_2j_2}$ can be parallelized if qubit correlations between the two motifs are trivial.
    \item For each group of motifs identified for parallelization, execute the learning algorithm depicted in Fig.~\ref{fig: dynamic_empirical_learning}.
    \item Pad the resulting learned DD sequences for each motif on the corresponding locations in the initial target circuit.
\end{enumerate}

Details for creating training circuits for empirical DD learning on a given dynamic circuit are discussed in Appendix~\ref{sec:genetic_algo_training_utility_appendix}. Specific discussion regarding the application of the empirical learning algorithm for DD optimization on limited-connectivity superconducting-qubit quantum computers and addressing device- and platform-specific errors is discussed in Appendix~\ref{sec:applying_dd_to_training_appendix}, and an example application is found in Appendix~\ref{sec:qftmtraining_results_appendix}. Runtime, resource allocation, and efficiency of the algorithm are discussed in Appendix~\ref{sec:scaling}, and detailed characterization and experimental data relating to the GADD method can be found in Ref.~\cite{Ton24}.

\section{Learning and benchmarking DD sequences for dynamic circuits}
\label{sec:application_and_benchmarking}
We perform empirical optimization experiments to learn DD sequences for dynamic circuits on \verb|ibm_kyiv|, one of IBM's Eagle superconducting qubit devices. To rigorously assess the effectiveness of the learned sequences, we benchmark performance using two complementary protocols: mid-circuit measurement randomized benchmarking (MCM-RB)~\cite{Gov23} and dynamic circuit randomized benchmarking (DC-RB)~\cite{Shi24}. MCM-RB isolates measurement-induced error channels in a controlled few-qubit setting aligned with the DD training configuration, allowing direct validation of the learned suppression mechanisms. In contrast, DC-RB evaluates performance in a larger-scale dynamic circuit executed on a 20-qubit chain, providing an application-independent and experimentally realistic test of learning performance. Together, these benchmarks establish both targeted suppression in the training setting and generalization of DD learning from motifs to scalable dynamic circuits.
\subsection{Learning on QFT+M and MCM-RB}
\label{sec:learningandmcmrb}
In a QFT+M circuit with $N$ qubits, $N$ circuit layers with mid-circuit measurement operations are concatenated. In the $m^{\text{th}}$ circuit layer, $0\leq m \leq N-1$, the $m^{\text{th}}$ qubit is measured, and a conditional $R_k \equiv \begin{pmatrix}
        1 & 0 \\
        0 & e^{2\pi i/2^k}
    \end{pmatrix}$ 
operation is applied to all subsequent qubits $m' > m$ for $k = m' - m + 1$ if qubit $m$ is measured to be in $\ket{+}$ in the desired basis.

In our learning step, we perform QFT+M circuit partitioning as described in Sec.~\ref{sec:dynamic_circuit_intro} into 5-qubit subregisters and their respective temporal intervals (Fig.~\ref{fig:dynamic_qft_mcmrb}a). For the 30-qubit chain depicted, we arrive at 6 motifs $M_{11}, M_{22}, \dots, M_{66}$, where $M_{ij}$ denotes the circuit motif of time interval $T_i$ and register $R_j$, with MCMs present. To increase DD sequence learning efficiency, we parallelize such that learning on $M_{ii}$, $M_{(i+3)(i+3)}$ occur simultaneously. We perform 10 GADD iterations for each of the resulting 3 parallelized blocks; details of training are described in Appendix~\ref{sec:genetic_algo_training_utility_appendix}.

\begin{figure*}
    \centering
    \includegraphics[width=\linewidth]{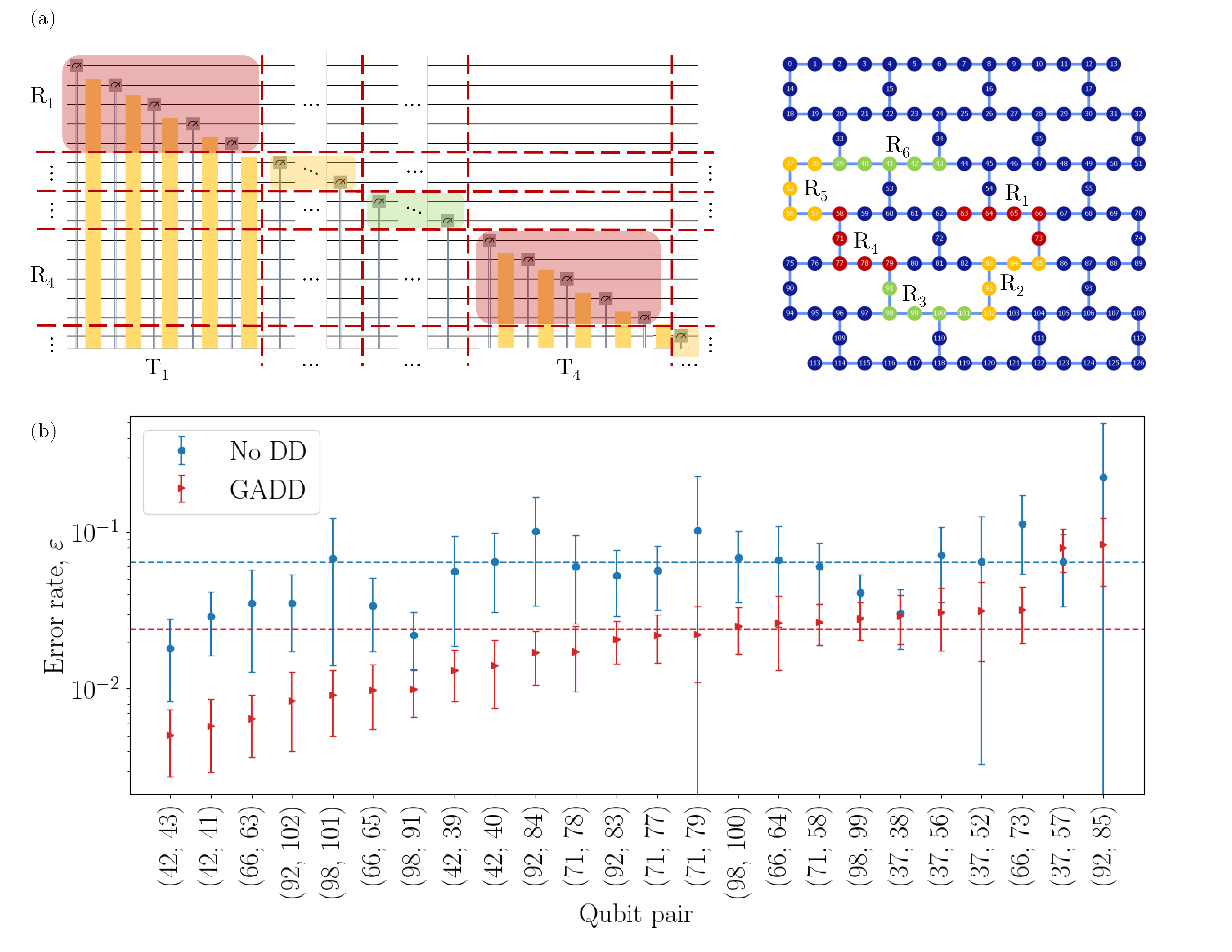}
    \caption{(a) We optimize DD strategies for the QFT+M circuit with an initial choice of 30 device qubits in a 1D chain. Chain qubits are associated with six partitions such that DD motifs on pairs of partitions marked with identical colors can be optimized in parallel. Focusing on $R_1$ and $R_4$, the associated training circuit is constructed by concatenating the QFT+M circuit on five qubits, which has an easily computable outcome distribution, on each partition. (b) Error per layer statistics of all sampled qubit pairs from the QFT+M circuit, sorted by post-GADD error rate (red), along with their corresponding pre-DD error rate (blue). Dashed lines represent error per layer values for the corresponding DD strategies.}
    \label{fig:dynamic_qft_mcmrb}
\end{figure*}

We assess the performance of these sequences learned from the QFT+M circuit motifs to suppress errors associated with mid-circuit measurement operations through the \verb|mcm-rb| protocol from the larger MCM-RB suite~\cite{Gov23}. In this protocol, circuits are constructed on two qubits, one unitary and measured respectively, where the unitary qubit receives a fixed-length $l$ sequence of random Clifford operations while the measured qubit receives an MCM interleaved in the circuit timeline between all pairs of adjacent measurements. Then, the probability that the unitary qubit, prepared in the $\ket{0}$ state, remains as $\ket{0}$, is measured as a function of $l$ via averaging over many random length $l$ circuits. The error rate per \verb|mcm-rb| layer (EPL) is then extracted through fitting $p(\text{unitary qubit }\ket{0}) = A\alpha^l+ B$, where the EPL $= (1-\alpha)/2$~\cite{Kni08, Gov23}.

We benchmark the performance of our empirically learned DD strategies on each circuit motif by fixing one measured qubit, then considering the respective qubit pairs with all other qubits in the motif, for $24$ total MCM-RB experiments. For each experiment, we calculate the EPL without DD; then, we implement the GADD sequence associated with the color of the idle period on the unitary qubit during measurement during all idle periods in the \verb|mcm-rb| circuit and calculate the EPL again. As the random Clifford operations perform a single-subsystem twirl on existing always-on unitary circuit errors, the difference in EPL with and without application of GADD strategies quantifies the learned sequences' ability to suppress errors associated with mid-circuit measurement.

We show the EPL associated with the unitary qubit in each tested qubit pair in Fig.~\ref{fig:dynamic_qft_mcmrb}b, representing comprehensive benchmarking of GADD sequences for each motif. All raw results of \verb|mcm-rb| experiments can be found in Appendix~\ref{sec:mcmrb_results_appendix}. As depicted, our empirically learned sequences significantly decrease the EPL, the key figure of merit, in all six DD motifs for which learning was performed, and in all but three qubit pairs (which suffer from measured-/unitary-qubit frequency collisions and thus have unreliable DD learning; see Appendix~\ref{sec:applying_dd_to_training_appendix} and~\ref{sec:mcmrb_results_appendix} for further discussion). Indeed, we reduce the median EPL of unitary qubits across all motifs from 0.060 without DD to 0.021 with our optimized DD sequences, as well as the mean unitary qubit EPL from 0.064(14) to 0.023(2). Such a three-fold improvement in benchmarked circuit error rates translates directly to improved dynamic circuit performance and demonstrates the ability to learn measurement-error suppressing DD strategies directly from subcircuit motifs from quantum applications.

\subsection{DC-RB}
\begin{figure*}[t]
    \includegraphics[width =\linewidth]{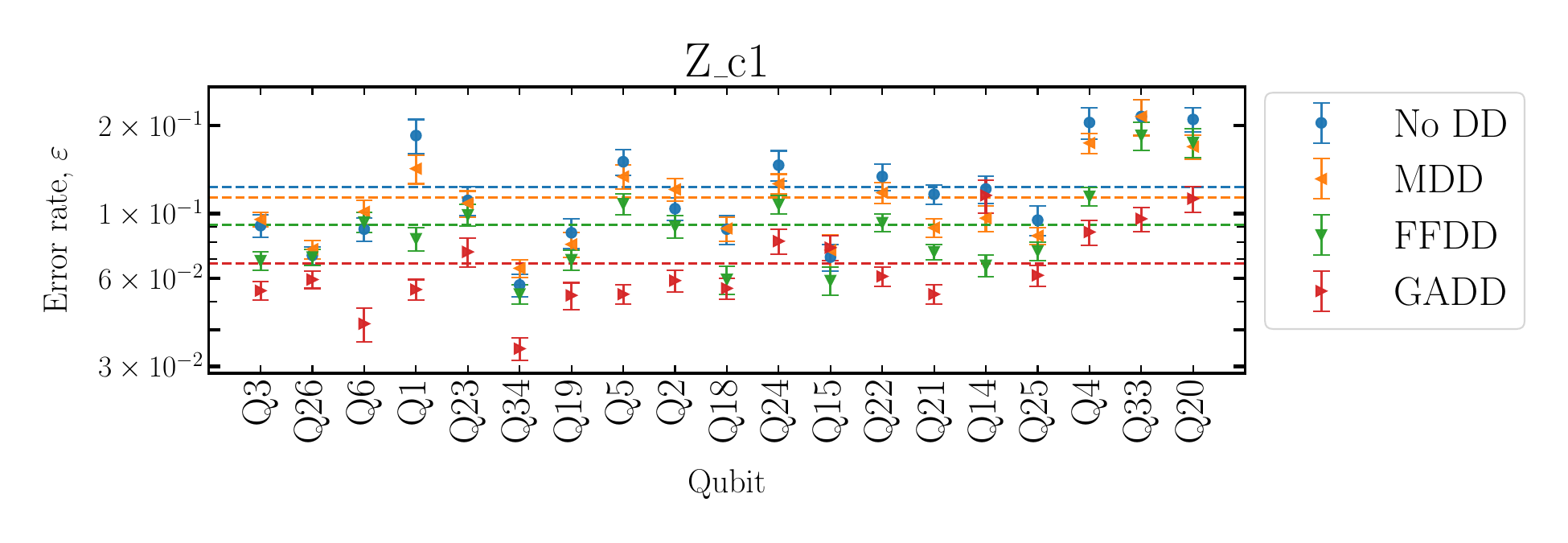}
    \includegraphics[width =\linewidth]{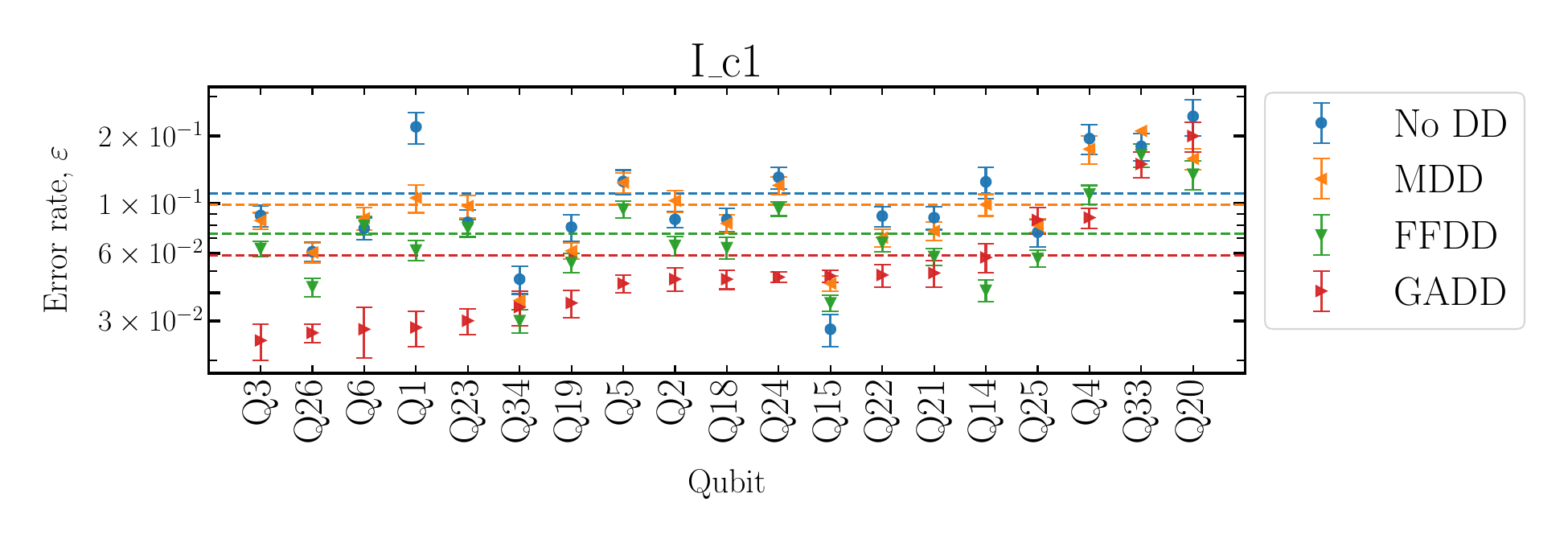}
    \cprotect\caption{EPL for each qubit on which the DC-RB blocks (top) \verb|Z_c1| and (bottom) \verb|I_c1| described in
  Ref.~\cite{Shi24} were applied in parallel on 20 qubits. GADD outperforms both the fixed-time MDD and FFDD sequences, which require prior estimation of the feedforward time $\tau_{\mathrm{FF}}$, across the register, capturing both always-on and measurement-induced crosstalk errors that theoretically derived sequences fail to suppress. Qubit labels on the x-axis denote the device qubit on \verb|ibm_kyiv| receiving dynamical decoupling operations during
  measurement of device qubit 0. Dashed lines represent error per layer values for the corresponding DD strategies.}
    \label{fig:dcrb_results}
\end{figure*}

We proceed beyond the QFT+M application setting by comparing empirically learned GADD sequences to existing, theoretically derived DD sequences on the dynamic circuits randomized benchmark (DC-RB). In contrast to MCM-RB, which characterizes measurement-induced errors in a controlled few-qubit setting, DC-RB provides an application-independent assessment of learned DD performance at 20-qubit scale. In DC-RB, we interleave two types of dynamic circuit blocks, \verb|Z_c1| and \verb|I_c1|, between one-qubit randomized benchmarking sequences of unitary qubits, on registers of fixed unitary and measured qubits~\cite{Shi24}. These circuit blocks initialize the measured qubit in $\ket{1}$ with an $X$-operation. Conditional on measurement of ``1'', a $Z$-gate or identity operation respectively is then applied to each unitary qubit and the measured qubit is reset by an $X$-gate. Note that in the case of the \verb|Z_c1| block, readout assignment errors are fed into unitary qubits as Pauli errors. We apply the GADD protocol in parallel on 20 connected qubits on \verb|ibm_kyiv|. In the register of connected qubits, we have both ``always-on'' traditional unitary crosstalk as well as measurement-induced qubit-correlated errors; through empirical optimization, we learn sequences that suppress both sources of error. 

We train GADD separately on the \verb|Z_c1| and \verb|I_c1| blocks using motif circuits consisting of a specific randomized Clifford sequence of length $l=3$, distinct from the QFT+M training circuits. The resulting sequences are applied to idle periods during MCM and FF in random DC-RB circuits of varying depth $l$, and $p(\text{unitary qubit in } \ket{0})$ is fit as a function of $l$ to extract the error per DC block layer. The effectiveness of sequences trained on short $l=3$ motifs across circuits up to $l=35$ directly demonstrates generalization of learned DD strategies across circuit depth.

We compare empirically learned GADD sequences against DD sequences theoretically derived for dynamic circuits: MDD, which applies two $X$ pulses during $\tau_{\mathrm{M}}$, and FFDD, which applies four $X$ pulses scheduled using prior knowledge of $\tau_{\mathrm{M}}$ and $\tau_{\mathrm{FF}}$~\cite{Bau24,Shi24}. As shown in Fig.~\ref{fig:dcrb_results}, GADD outperforms both purpose-built sequences across the 20-qubit register. For the \verb|Z_c1| block, the mean EPL with GADD is 0.068(5), compared to 0.091(8) with FFDD, 0.11(1) with MDD, and 0.12(1) without DD; analogous improvements are observed for \verb|I_c1|, with mean EPL of 0.059(10) versus 0.074(7), 0.099(10), and 0.11(1) respectively. Full results are included in Appendix~\ref{sec:dcrb_results_appendix}.

The advantage of GADD over MDD and FFDD, sequences explicitly optimized for dynamic circuits using calibrated feedforward timing, reveals that theoretically derived sequences fail to capture the full error environment of the 20-qubit chain. In particular, GADD captures both coherent ``always-on'' errors such as $ZZ$-crosstalk between neighboring qubits as well as measurement-induced crosstalk errors, neither of which MDD and FFDD suppress. GADD achieves this without requiring prior knowledge of $\tau_{\mathrm{FF}}$ or device-specific calibration inputs (Fig.~\ref{fig:dynamic_qft_mcmrb_appendix}c). The persistence of this improvement across the 20-qubit register establishes empirical optimization as systematically outperforming purpose-built theoretical approaches in this setting.

\section{Learned sequences for QFT+M}
\label{sec:dd_for_qftm}
Having reduced dynamic circuit error rates through empirical DD learning, we apply the learned sequences to QFT+M circuits, demonstrating that benchmarked error suppression translates directly to improved performance in a practical quantum algorithm. In Ref.~\cite{Bau24}, the authors implement QFT+M on IBM superconducting qubit hardware and demonstrate nontrivial QFT process fidelities on up to 40 qubits through semiclassical fan-out operation, which allows for $O(N)$ circuit depth as opposed to $O(N^2)$ depth of the traditional unitary QFT for an $N$-qubit state~\cite{Cor21}. However, the demonstration is performed on a largely disjoint selection of circuit qubits over the quantum device for crosstalk minimization, while practical quantum algorithms require QFT+M on registers where qubits are densely connected via entangling operations. In particular, QFT+M on connected qubit registers is essential for any application where the input state is prepared via multi-qubit unitary evolution.

Here, we apply empirically learned GADD sequences to enhance QFT+M performance on a 1D qubit chain, where all qubits are connected via entangling operations to their two nearest neighbors, on the 127-qubit superconducting qubit quantum device \texttt{ibm\_kyiv}, making this operationally relevant regime accessible. We quantify QFT performance enhancements through the process fidelity $\fp$, defined as the squared overlap between the ideal and experimentally prepared output states averaged over a sample of input computational basis states~\cite{Bau24}.  

\begin{figure*}
    \includegraphics[width = \linewidth]{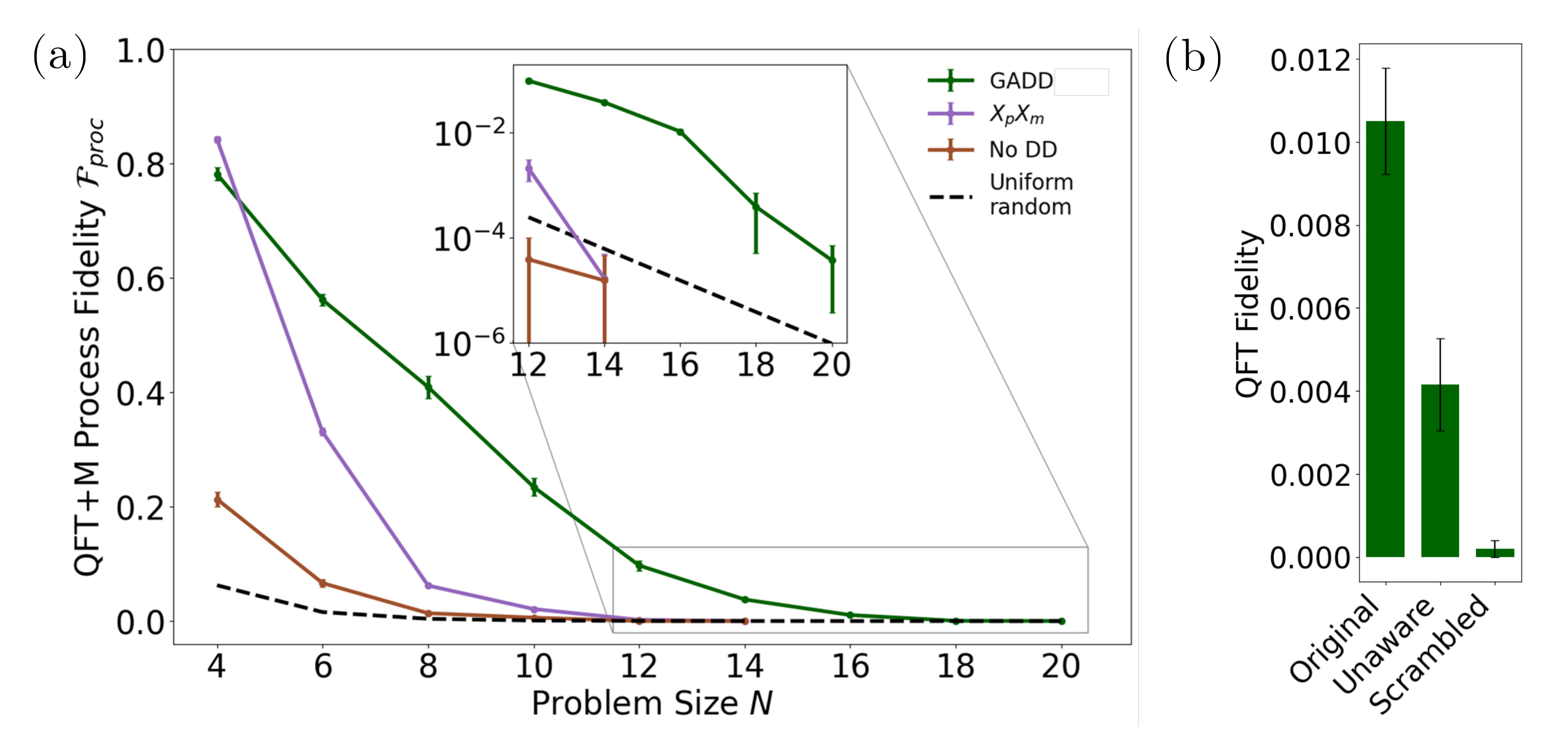}
    \caption{(a) Through implementation of empirical learning, we observe persistent nontrivial $\fp$ relative to canonical DD approaches, including nontrivial $\fp \geq 1\%$ in single experimental iterations for problem sizes $N = 8$--$16$. (inset) Relative to guessing computational basis states uniformly at random, with $\fp = 2^{-N}$ (dashed), empirically learned DD yields significantly greater $\fp$ through problem size $N = 20$, which is unachievable beyond $N = 14$ for the canonical $X_pX_m$ staggered sequence. (b) Under counterfactual applications of learned DD sequences at $N = 16$, where DD sequences learned on a single 5-qubit motif are applied to all 16 qubits (``Unaware''), and learned DD sequences are randomly allocated to motifs (``Scrambled''), $\fp$ drops significantly, demonstrating that empirical learning with motifs encoding information about local measurements is critical to successful DD optimization.}
    \label{fig:qft_scaling_results}
\end{figure*}

We select length-$N$ subchains for $N = 4,6,8,\dots,20$ from the $30$-qubit chain on which GADD sequences were learned. For each $N$, QFT process fidelities are calculated by sampling $16$ computational basis states $\ket{s}$ from a uniform distribution over all $2^N$ values of $s$. According to the protocol presented in Ref.~\cite{Bau24}, we calculate $\fp$ from efficiently preparing $\text{QFT}^\dag\ket{s}$ for each sampled $s$ via local single-qubit virtual state rotations and empirically estimate the amplitude of $\ket{s}$ in the state resulting from application of QFT+M with dynamical decoupling. The QFT+M circuit on $N$ qubits is constructed by appending semiclassical quantum Fourier transform circuit layers on selected qubits indexed by $m = 0, 1, \dots, N-1$, as described in Sec.~\ref{sec:learningandmcmrb}. 

Empirically learned dynamical decoupling sequences are added to all idle periods during MCMs in the QFT+M circuit. For a measured qubit $q_m$ belonging to register $R_i$, we apply the DD sequences learned on the motif $M_{ii}$ to all idle, unitary qubits during the measurement instruction. DD sequences are assigned to unitary qubits based on distance parity from the measured qubit (i.e., equivalently to qubit coloring for the corresponding motif). Furthermore, learned sequences are tested against performing the same experiment with no DD and with the naive application of the $X_pX_m$ staggered sequence over the same sample of $\{\ket{s}\}$ for each $N$. The $X_pX_m$ staggered sequence suppresses $Z$ phase and $ZZ$ crosstalk errors with DD in quantum processors~\cite{Zho23,Shi24dissipative} and has demonstrated relative success when evaluated against other canonical DD strategies in the unitary setting~\cite{Ezz22}, making it a common choice for applying DD to a circuit without \textit{a priori} knowledge of the backend noise, a device- and measurement error-agnostic DD strategy that can be applied easily by end-user quantum computation practitioners.

We compare QFT+M process fidelity $\fp$ results as a function of $N$ when GADD sequences are applied with the measurement error-agnostic $X_pX_m$ staggered strategy in Fig.~\ref{fig:qft_scaling_results}a. Results without error suppression due to DD are included for comparison, where $\fp < 1\%$ for $N \geq 8$, matching the behavior in Ref.~\cite{Bau24} with even more substantial fidelity decay due to the linearly connected qubit selection. Upon application of $X_pX_m$ strategy, we observe much-improved fidelities at small problem sizes, but reach the $\fp < 1\%$ threshold for $N \geq 10$, comparable to $\fp < 1\%$ threshold for $N \geq 11$ observed in the unitary QFT with DD as reported in Ref.~\cite{Bau24}. Although the improvement in $\fp$ relative to the results without DD is qualitatively consistent with previous successes in implementing theoretically derived DD sequences in the unitary setting to dynamic circuits~\cite{Bau24, Shi24}, the poor scaling with respect to $N$ is reflective of the challenge of suppressing MCM-induced errors that a naive method agnostic to measurement errors such as staggered $X_pX_m$ does not address.

By implementing DD sequences learned in the error context of the device, we access a regime of significantly greater $\fp$ across all $N$, extending the advantage provided by dynamic circuits to connected qubit subsets. For intermediate problem sizes of $8 \leq N \leq 14$, we achieve an over ten-fold increase in $\fp$ when compared to $X_pX_m$. Beyond $N = 14$, the dynamic QFT implemented with the $X_pX_m$ staggered strategy fails to yield signal above the $\fp = 2^{-N}$ threshold, representing the QFT fidelity associated with sampling computational basis states uniformly at random (Fig.~\ref{fig:qft_scaling_results}a inset). However, we exceed this threshold by at least a factor of $10$ up to the largest sampled problem size of $N= 20$, extending the useful range of the QFT+M subroutine. Furthermore, we achieve $\fp \geq 1\%$ up to $N = 16$ qubits, reflective of a $>$ 64-fold increase in the dimensionality of the Hilbert space on which QFT+M can be executed at the $\fp = 1\%$ level. At this problem size, we further evaluate the efficacy of empirical DD learning with counterfactual experiments (Fig.~\ref{fig:qft_scaling_results}b). We find that upon both assigning the learned DD sequences on the motif with the highest training utility across the entire 16-qubit circuit without qubit specificity, as well as randomly assigning learned DD sequences across motifs, $\fp$ drops significantly. By leveraging this local nature of noise correlation, we apply GADD to extend the range of problem sizes for which the QFT can be reliably applied on connected qubit chains. These results constitute direct experimental evidence that measurement-induced errors in dynamic circuits carry spatially localized structure that generic, qubit-agnostic DD strategies cannot capture — motivating the motif-based learning framework introduced here.

\section{Learned sequences for QFT+M of GHZ states}
\label{sec:qft_ghz_results}
\begin{figure*}
    \centering
    \includegraphics[width=\linewidth]{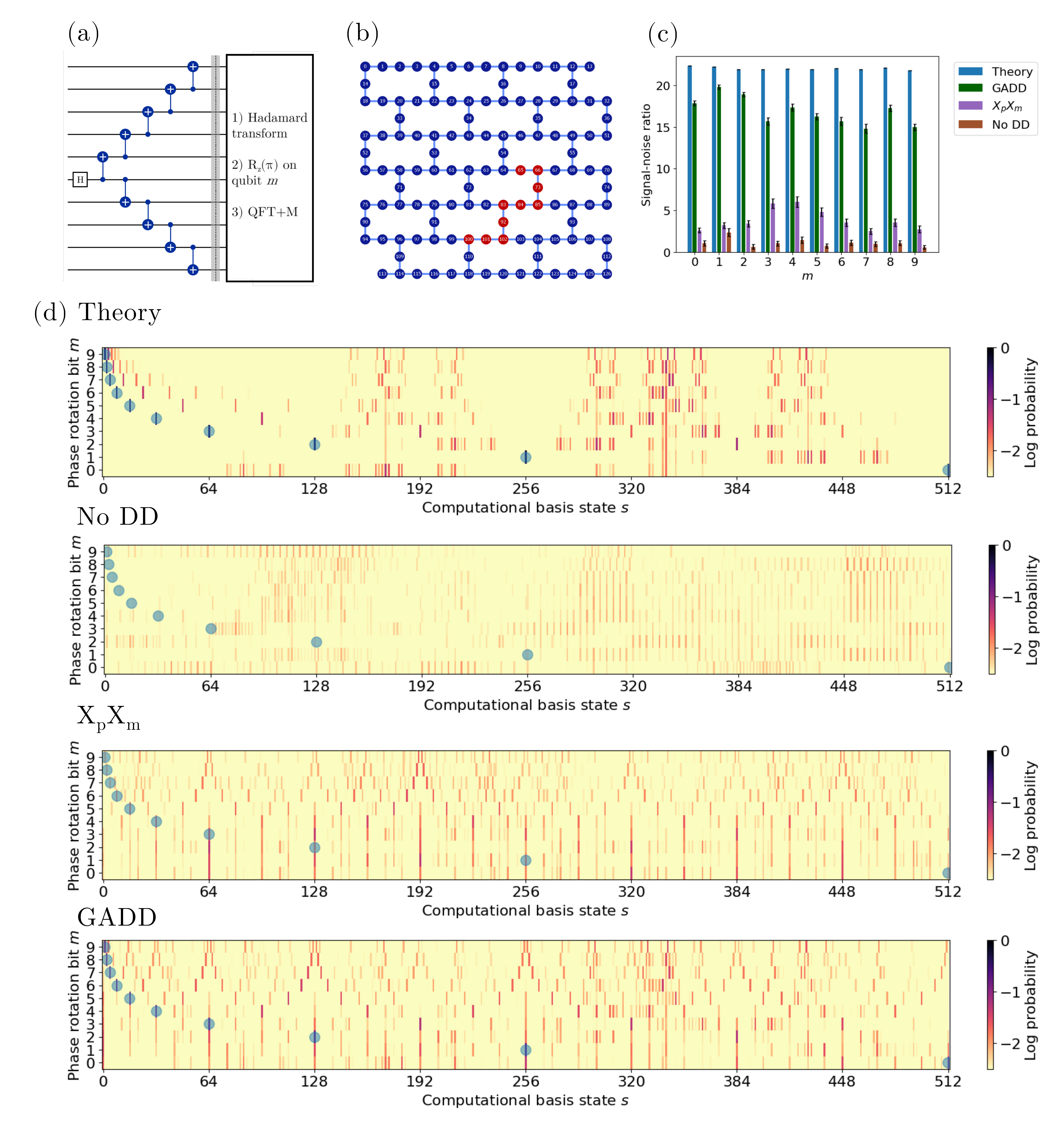}
    \cprotect\caption{(a) The family of states $\{\ket{\Psi_m}\}$ is constructed by a fan-out preparation of the Z-basis GHZ state, followed by a Hadamard transform to generate the X-basis GHZ state, then a $R_z(\pi)$ rotation on qubit $m$. (b) As the fan-out GHZ state preparation requires an adjacent set of device qubits, error suppression is necessary to address both always-on and measurement-induced correlated errors. We perform the QFT+M on the selected 10-qubit chain on the \verb|ibm_kyiv| device. (c) Signal-to-noise of the experimentally computed $\text{QFT}\ket{\Psi_m}$ via the semiclassical QFT with various error suppression strategies. Error bars are calculated by bootstrapping over 10000 experimental shots. (d) Visualization of $\text{QFT}\ket{\Psi_m}$ for all $m = 0, 1, \dots, 9$. The theoretically predicted peak at $s = 2^{N - m - 1} = 2^{9-m}$ (highlighted with large blue dots) is only observed with high signal-to-noise through empirically learned DD.}
    \label{fig:qft_ghz_main}
\end{figure*}

Partial measurement, feedback, and error detection are particularly important for studying many-body entangled states~\cite{Ris13,Zha17,Zha18,Jav25,Lia25}. In particular, GHZ states are maximally entangled many-body states whose collective phase sensitivity scales linearly with system size, enabling Heisenberg-limited precision in quantum metrology~\cite{Bol96,Gio04,Hor09}. Because their coherence is encoded in a single global off-diagonal term, GHZ states are exceptionally sensitive to spatially correlated noise across the register. They therefore provide a stringent test of measurement-resilient control strategies in dynamic quantum circuits. Furthermore, unitary fan-out preparation of GHZ states generates extensive entanglement on a 1D qubit chain (Fig.~\ref{fig:qft_ghz_main}a), implying that measurement-induced errors cannot be mitigated by choosing disconnected qubit subgraphs. In this regime, both always-on interactions and measurement-induced correlated errors act across the register, making optimal suppression nontrivial \textit{a priori}. 

With the nontrivial $\fp$ enabled by empirically learned DD in intermediate one-dimensional chains of length $N$, we apply QFT+M to entangled states prepared via unitary fan-out. Specifically, we consider the family
\begin{multline}
    \ket{\Psi_m} = \frac{1}{\sqrt{2}}\bigg( \ket{+}^{\otimes(N-m-1)}\ket{-}\ket{+}^{\otimes m}
\\ + \ket{-}^{\otimes(N-m-1)}\ket{+}\ket{-}^{\otimes m}\bigg)
\end{multline}
corresponding to an X-basis GHZ state on $N = 10$ qubits with a single $R_z(\pi)$ phase flip applied to qubit $m = 0, \dots, 9$. The Fourier transform of $\ket{\Psi_m}$ is governed by coherent interference among all $2^{N-1}$ computational basis components of the GHZ superposition, producing sharply localized many-body interference peaks whose position depends on the location of the phase flip. The constructive interference responsible for these peaks arises from the globally encoded phase structure of the state and is therefore highly sensitive to correlated phase errors across the register. In particular, $\text{QFT}\ket{\Psi_m}$ exhibits sharply peaked amplitude on computational basis states $\ket{2^{N-1-m}}$:
\begin{equation}
    |\bra{2^{N-1-m}}\text{QFT}\ket{\Psi_m}|^2 > \frac{2}{\pi^2} \approx 0.203,
\end{equation}
with proof given in Appendix~\ref{sec:qft_on_ghz_full}. As a result, the index $m$ is encoded logarithmically in the location of the dominant Fourier peak; however, accurate recovery of $m$ from $\ket{\Psi_m}$ for all $m$ requires suppression of measurement-induced errors on all $N$ qubits involved in the dynamic circuit.

We prepare $\ket{\Psi_m}$ on a 10-qubit chain of \verb|ibm_kyiv| (Fig.~\ref{fig:qft_ghz_main}b) and implement the semiclassical QFT. As the GHZ state is prepared identically across all error suppression conditions, the relative SNR improvements are attributable to DD performance during the QFT+M subroutine rather than differences in input state quality. In this system size, we observed nontrivial QFT+M $\fp$ under learned DD via random sampling of computational basis states, as shown in Sec.~\ref{sec:dd_for_qftm}. We quantify the behavior of these sequences on the QFT+M of $\ket{\Psi_m}$ via a signal-to-noise ratio (SNR) metric, defined as 
\begin{equation}
    \text{SNR} = \frac{P_{2^{N-1-m}} + P_{2^N - 2^{N-1-m}}}{2 \sqrt{\mathbb{E}_{s}\left[(P_s - \mathbb{E}[P_s])^2\right]}},
\end{equation}
where $P_s$ denotes the measured population of state $s$.

With the same learned DD strategies, we observe SNR values of $15$–$20$, approaching the finite-size theoretical bound $\text{SNR} < 22$ (Fig.~\ref{fig:qft_ghz_main}c) obtained by evaluating the ideal noiseless $N = 10$ QFT output distribution. Visual inspection confirms strong contrast at the predicted peak locations $s = 2^{N-1-m}$ for all $m$, recovering the expected logarithmic dependence (Fig.~\ref{fig:qft_ghz_main}d). In contrast, the measurement-agnostic $X_pX_m$ sequence yields substantially noisier results, and the absence of DD leads to further degradation. Quantitatively, empirically learned DD enhances SNR by a factor of three to four across all $m$ relative to $X_pX_m$. Achieving comparable signal levels with $X_pX_m$ would therefore require approximately $9$–$16$ times more experimental shots.

Beyond improved shot efficiency, the recovery of sharply peaked and symmetry-constrained Fourier structure demonstrates preservation of genuinely many-body interference in a maximally entangled state under dynamic circuit execution. The stability of these interference patterns indicates the ability of empirically learned DD strategies to suppress spatially and temporally correlated measurement-induced noise that would otherwise rapidly degrade many-body coherence. 

\section{Conclusions}
\label{sec:conclusions}
Effective and efficient error suppression is essential for harnessing the full capabilities of dynamic quantum circuits. At the same time, identifying optimal dynamical decoupling sequences in this setting is challenging due to the time- and qubit-dependent nature of measurement-induced errors. In this work, we introduce an empirical learning protocol that optimizes DD sequences as a function of both temporal subinterval and qubit subregister, enabling layer-dependent error suppression tailored to dynamic circuit execution. Using mid-circuit measurement and dynamic circuit randomized benchmarking, we observe two- to three-fold reductions in average dynamic circuit error rates from empirically learned DD sequences relative to undecoupled circuits. We further apply our approach to the dynamic circuit implementation of the quantum Fourier transform, extending previous demonstrations to connected qubit chains and achieving substantial improvements in QFT+M process fidelity. Leveraging the resulting signal enhancements, we realize high signal-to-noise QFT-based characterization of an entangled 10-qubit state in an experimental setting.

While the present demonstrations are performed on superconducting qubit hardware, the empirical learning framework is platform-agnostic and relies only on the ability to execute parameterized circuits and collect outcome statistics; extending these results to trapped-ion or neutral-atom platforms where dynamic circuits are increasingly relevant~\cite{Gae21,Fos23,Dec23,Gra23} is a natural direction for future work. As quantum hardware continues to scale and real-time classical control matures, dynamic circuits are expected to play an increasingly central role in quantum algorithms, error mitigation, and error correction~\cite{Tem17,Cai23}. Beyond the circuits explored here, measurement-conditioned dynamics enable the preparation and control of quantum many-body states with tunable entanglement structure, as demonstrated in recent experiments~\cite{Ski19,Fis23,Bau23,Che25}. More broadly, as quantum processors approach fault-tolerant operating regimes, suppressing measurement-induced errors during repeated syndrome extraction and adaptive control cycles will become increasingly critical~\cite{Ter15,Cra16,Goo25}. The empirical DD optimization framework presented here suppresses such errors without requiring detailed noise models, providing a practical pathway toward higher-fidelity quantum computation with dynamic circuits.

\section*{Data Availability}
The code implemented to generate the data that support
the findings of this article is openly available~\cite{gaddpackage}. The
raw data are not publicly available but are available upon reasonable request from the authors.

\section*{Acknowledgements}
We thank Elisa Bäumer, Luke C. G. Govia, Swarnadeep Majumder, Maika Takita, Matthew Ware, Eric Zhang, and Helena Zhang for helpful discussions. We acknowledge support from the Army Research Office under QCISS (W911NF-21-1-0002). The views and conclusions contained in this document are those of the authors and should not be interpreted as representing the official policies, either expressed or implied, of the Army Research Office or the U.S. Government. The U.S. Government is authorized to reproduce and distribute reprints for Government purposes notwithstanding any copyright notation herein.

\bibliography{main}
\pagebreak
\appendix
\onecolumngrid

\section{Empirical learning details}
As defined in Ref.~\cite{Ton24}, empirical learning of DD requires defining how to create training circuits from a larger quantum circuit intended for application whose performance is well quantified by a selected utility function and how to apply test dynamical decoupling sequences to training circuits. Here, we address the specifics of the respective points in our application to the QFT+M dynamic circuit on superconducting qubit hardware and report our training results.

\begin{figure*}[ht]
    \centering
    \includegraphics[width=0.75\linewidth]{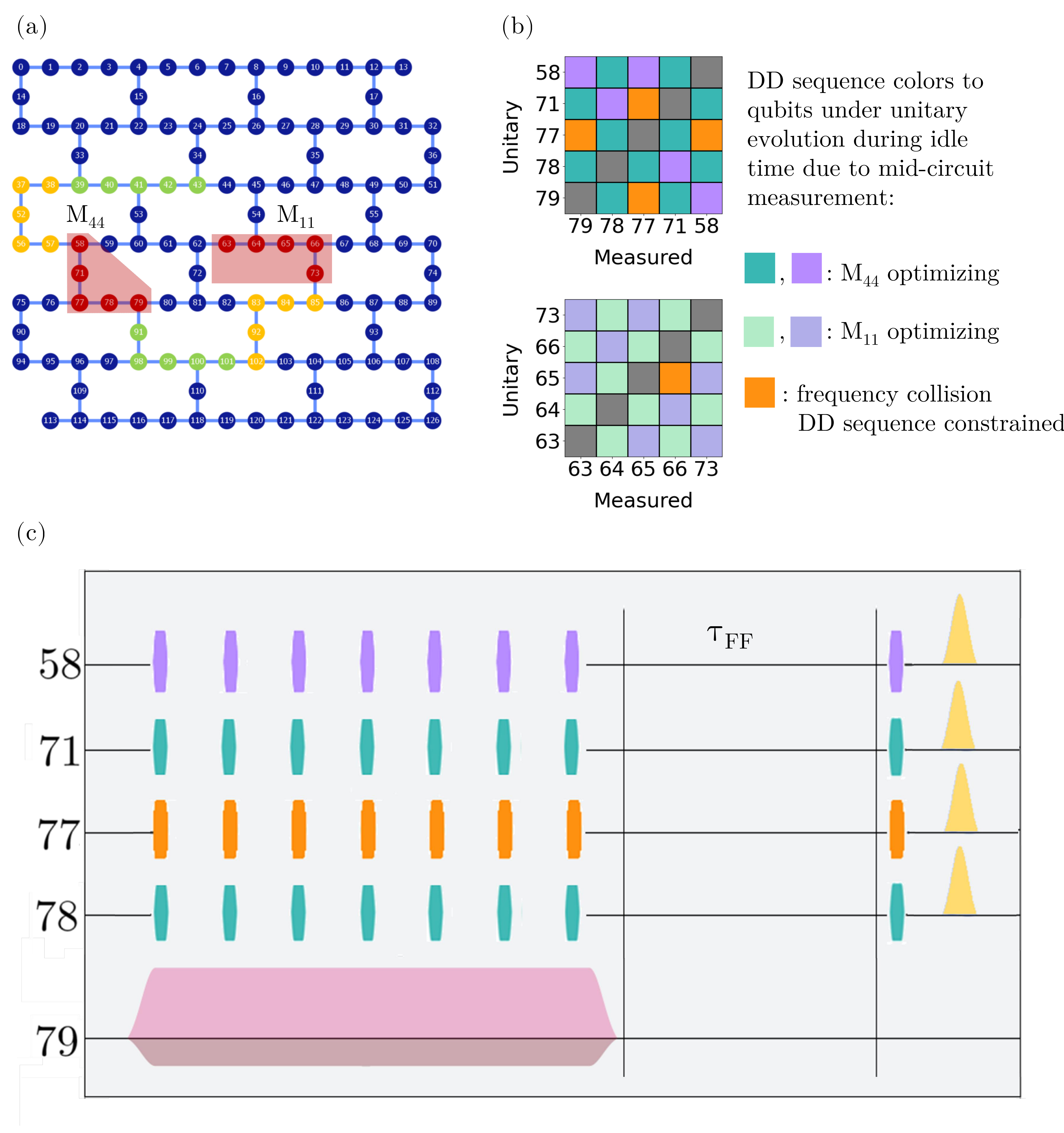}
    \cprotect\caption{(a) The 30-qubit 1D qubit chain shown in Fig.~\ref{fig:dynamic_qft_mcmrb} as a subset of the full \verb|ibm_kyiv| device. Qubit partitions are colored to denote motifs on which DD optimization is performed in parallel. (b) Using Eqn.~\ref{eqn:collision_condition}, we consider each measured-unitary qubit pair in the dynamic circuit. For the sake of demonstration, we mark in orange the qubit pairs $(m,u) = (77,79), (58,77), (79,77), (77,71), (66,65)$ in $R_4$ (top) and $R_1$ (bottom), as pairs for which frequency collisions were detected. The remaining measured-unitary pairs are colored according to the scheme described in Fig.~\ref{fig: dynamic_empirical_learning}. (c) The timeline of a sample subcircuit layer during DD optimization.  The collisional sequence in orange is the discussed constrained sequence of $I$ as its first six pulses and $X$ as its final two, to provide error suppression during  $\tau_{\mathrm{FF}}$ when the measurement pulse is off. In the case where constrained sequences occur on two qubits connected with an existing entangling operation on the device, the sequence on the second qubit is instead constrained to identity operations only as a heuristic approach to crosstalk cancellation~\cite{Zho23}.}
    \label{fig:dynamic_qft_mcmrb_appendix}
\end{figure*}

\subsection{Training circuit and utility function selection}
\label{sec:genetic_algo_training_utility_appendix}
We perform DD learning for the QFT+M circuit on the 30 qubits shown in Fig.~\ref{fig:dynamic_qft_mcmrb_appendix}a. We define DD motifs through partitioning all circuit qubits into six registers of five qubits as depicted in Fig.~\ref{fig:dynamic_qft_mcmrb}. Then, the relevant QFT+M circuit motifs correspond to the six subcircuits that lie on the ``diagonal'' of the QFT+M subcircuits; i.e., the subcircuits with nontrivial dynamic circuit behavior correspond to time intervals in sequential order equal to the order of the qubit subregisters. To quantify the behavior of these subcircuit structures in each iteration, we compare the empirically observed outcome distribution $\hat{\vb{p}}(s)$ of the training circuit with the theoretical distribution $\vb{p}$. For ease of computation, we change values of local $Z$-phase-gate parameters such that the training circuit for each motif corresponds to a 5-qubit QFT+M operation on the $\text{QFT}^{\dag}\ket{00000}$ state, as for any computational basis state $\ket{s}$, $\text{QFT}^{\dag}\ket{s}$ differs from $\text{QFT}^{\dag}\ket{0\dots 0}$ in only a qubit-dependent phase at each qubit. This preserves all circuit structure and order of MCMs, only changing circuit parameters. Thus, as previously demonstrated and discussed in the Cliffordization sections in Ref.~\cite{Ton24}, we expect that utility functions based on this circuit structure and $\vb{p}$ will be effective in an application setting.

We use the 1-norm similarity metric between $\hat{\vb{p}}$ and the theoretical distribution $\vb{p}$

\begin{equation}
    f_{\text{1-norm}} = 1 - \frac{1}{2} \sum_{s=0}^{2^N-1} \abs{\vb{p}(s) - \hat{\vb{p}}(s)}
    \label{eqn:1norm_utility}
\end{equation}
as the utility function for generating new DD populations in iterations of the genetic algorithm, where $N$ is the qubit number in the motif training subcircuit. With all hyperparameters and training circuits defined, we apply the following algorithmic steps to empirically learn DD sequences:
\begin{enumerate}[(i)]
  \item Determine genetic algorithm hyperparameters: DD sequence length $L$, number of independent DD sequences to optimize per motif (``colors'') $k$, number of distinct sets of $k$ sequences (``strategies'') to evaluate per iteration $K$, decoupling group $\mathcal{G}$ from which DD pulses are drawn
  \item Generate starting population of DD strategies (Fig.~\ref{fig: dynamic_empirical_learning}a). Genetic algorithm performance is maximized when, for each color and fixed pulse index $1 \leq i \leq L$, the elements of $\mathcal{G}$ are equally distributed across the $K$ strategies, i.e., each $g \in \mathcal{G}$ appears roughly $K/\abs{\mathcal{G}}$ times.
  \item Assign idle periods on unitary qubits in each dynamic circuit layer 
  to a color (Fig.~\ref{fig: dynamic_empirical_learning}b). We do this according to the distance of each unitary qubit on the quantum device to the nearest measured qubit: for shortest distance $d$ associated with a unitary qubit, the corresponding idle period receives color $d \pmod{k}$ for maximal sequence staggering~\cite{Niu24}.
  \item Schedule $K$ circuits; each circuit corresponds to implementation of its respective DD strategy, where idle periods assigned to any given color receive the respectively colored DD sequence.
  \item Execute the $K$ circuits on the quantum device and calculate the utilities $\{U^n\}$ associated with each of the outcome distributions over computational basis states for each circuit $1 \leq n \leq K$ (Fig.~\ref{fig: dynamic_empirical_learning}c). 
  \item Generate new candidate DD sequences via the genetic algorithm:
  \begin{itemize}
      \item Randomly select $K$ pairs of DD strategies with replacement from the population of $K$ sequences, with higher utility pairs more likely to be chosen
      \item Perform reproduction to arrive at $2K$ child DD strategies
  \end{itemize}
  \item Continue to iterate with the population of $K$ parent and $2K$ child DD strategies. After each evaluation of $3K$ DD sequences on the quantum device, only the $K$ highest-utility sequences are preserved before reproduction occurs. 
\end{enumerate}
In settings where GADD is performed in parallel on many motifs, a training circuit is created independently for each motif; then, the overall training circuit is created by aligning the start time for each circuit motif, which is possible as parallelizable circuit motifs must contain disjoint qubit registers. The utility function for each motif is then calculated by marginalizing away contributions of measured qubits belonging to other training circuits to the observed distribution $\hat{\vb{p}}$. In our experiment, we parallelize motifs pairwise according to the coloring in Fig.~\ref{fig:dynamic_qft_mcmrb_appendix}a and perform three separate learning experiments, corresponding to the colors of red ($M_{11}$ and $M_{44}$), yellow ($M_{22}$ and $M_{55}$), and green ($M_{33}$ and $M_{66}$) respectively. For convenience, we describe application of GADD to training circuits for the red motifs; the other two motif groups behave analogously.

\subsection{Applying DD strategies to training circuits}
\label{sec:applying_dd_to_training_appendix}
In Ref.~\cite{Ton24}, idle period ``coloring'' is implemented to study DD strategies, where composite DD sequences are represented by individuals in the genetic algorithm population, such that distinctly colored idle periods receive their respective distinct DD sequences from the strategy for improved suppression of correlated multi-qubit errors. Here, we extend DD coloring to dynamic circuits. Idle periods in the training circuit containing $m$ motifs from the target dynamic circuit are colored following the ansatz that measurement-induced errors behave similarly among qubits belonging to the same partition. As a result, each qubit partition within the training circuit is associated with its own set of colors. Here, we use two colors to denote the two independent DD sequences making up the strategy to be optimized on each circuit motif, corresponding to the shades of green and purple shown in Fig.~\ref{fig:dynamic_qft_mcmrb_appendix}b.

We assign colors to idle periods in order to suppress MCM errors according to established qualitative behavior of errors induced by MCMs on nearby idle qubits. Ref.~\cite{Gov23} establishes measurement-induced control error and measurement-induced two-qubit error as distinct qualitative error categories. Measurement-induced control errors occur when upon measurement of a given qubit in superconducting circuit QED systems, a resonator associated with the measured qubit is populated with a strong field~\cite{Bla21}. Photons associated with the field of the measured qubit resonator can occupy the resonator of adjacent unitary qubits, leading to unwanted coherent Z-phase errors on those qubits in the duration of the measurement pulse. Such measurement-induced Z-phase errors are amenable to suppression with DD~\cite{Shi24}. However, due to judicious hardware design and optimization, distinct resonators on a circuit QED device operate at different frequencies and more generally lie in distinct error environments on a superconducting quantum device, leading to distinct optimal DD sequences for each qubit. As the QFT+M circuit on a sequence of device qubits is constructed by measuring each qubit once, followed by a conditional $R_z$ operation on subsequent qubits, motifs can be defined simply according to the range of such measurement-induced control error, which we associate with 5-qubit subgraphs of the \texttt{ibm\_kyiv} device (Fig.~\ref{fig:dynamic_qft_mcmrb_appendix}a), and their associated intervals in time.

Measurement-induced two-qubit errors arise as the AC Stark shift associated with the measured qubit resonator field can bring the frequency of the measured qubit into near-resonance with neighboring unitary qubits. For a pair of qubits $q_1$, $q_2$ with coupling $J$ and operating frequencies $\omega$ and $\omega + \Delta$ respectively, the effective two-qubit Hamiltonian in a frame rotating at $\omega$ is~\cite{Gov23, Hey24}
\begin{equation}
    H = \frac{\Delta}{2}\sigma_{z,2} + J(\sigma_{+,1}\sigma_{-,2} + \sigma_{+,2}\sigma_{-,1}).
    \label{eqn:collision_hamiltonian}
\end{equation}
Although qubit frequencies in circuit QED systems are judiciously selected to ensure $\Delta$ is sufficiently large for qubits with nontrivial coupling $J$~\cite{Her21}, AC Stark shifts due to strong resonator fields can induce $J \sim \Delta$ between the measured qubit and a neighboring unitary qubit by decreasing $\Delta$~\cite{Dum24}. In this limit, the second term contributes nontrivially to the time evolution of the unitary qubit in the duration of the measurement instruction, leading to unwanted coherent excitation exchange.

Placing sequences tailored to suppress Z-phase errors in likely collisional settings will not benefit circuit performance. Thus, it is important to isolate potential instances of measurement-induced collision errors from the training circuits on which the DD strategy for coherent Z-phase errors is empirically optimized. Measurement-induced collisional errors only occur for small values of $\Delta$ between nearby qubits. To evaluate the potential for collisional error, we consider the transition frequencies $\omega_{01}$ and $\omega_{12}$ of all circuit qubits. In each circuit layer, we consider the difference frequencies
\begin{equation}
\begin{gathered}
    \Delta_1 = \abs{(\omega^{(m)}_{01} + \delta_s) - \omega^{(u)}_{01}} \\
    \Delta_3 = \abs{(\omega^{(m)}_{01} + \delta_s) - \omega^{(u)}_{12}}
    \label{eqn:collision_condition}
\end{gathered}
\end{equation}
over all pairs of measured qubits $m$ and unitary qubits $u$ within distance $4$ on the backend graph. Taking $\delta_s = -25$ MHz as a characteristic value for the measurement-induced Stark shift~\cite{Zha22}, we identify instances of $\Delta_1 \leq 17$ MHz and $\Delta_3 \leq 30$ MHz as possible collision instances. We neglect all other collision types as the measurement instruction pulse frequency is far detuned from all qubit frequencies, unlike the echoed cross-resonance pulse; thus, only the above (type 1 and type 3) collisions as classified in Refs.~\cite{Her21,Zhang22} are relevant. Furthermore, we consider pairs of $(m, u)$ beyond nearest neighbors due to multiplexing allowing for longer-range measurement-induced frequency collisions~\cite{Hei18}. In these settings, we apply a ``constrained'' sequence of $I$ as its first six pulses and $X$ as its final two, which is essential to avoid skewing the DD sequence learning process on the remaining qubit pairs. We call such sequences constrained as their placement and identity does not change during the GADD iteration; thus for clarity, we do not discuss these sequences as contributing to the behavior of any individual DD strategy in the genetic algorithm population. In our experiments, we perform such frequency tests among all qubit subregisters and apply collision-constrained sequences accordingly.

With determination of training circuits and allocation of idle periods in those training circuits to colors established, it remains to establish the timings of individual DD pulses within any given sequence. As established in Ref.~\cite{Ton24}, a significant advantage of the GADD method is its ability to sample all allocations of identity operations, which trivially belong to any decoupling group $\mathcal{G}$, among pulses in a DD sequence. This allows for many different DD sequence timing staggerings, corresponding to different suppression levels of different many-body error sources~\cite{Zho23}, to be sampled, without consideration of continuous-variable optimization for pulse time periods. As a result, we implement uniform pulse spacings during idle periods to allow the genetic algorithm to unbiasedly explore all allowed sequence staggerings. However, current device design prevents application of DD pulses simultaneous to classical measurement processing during the $\tau_{\mathrm{FF}}$. As a result, we generalize the FF-compensated approach introduced in~\cite{Bau23}, which established error suppression during feedforward by applying DD pulses immediately before and after the dedicated time interval by implementing the final pulse in our DD sequence to occur as the first instruction in all cases of the classical logic control flow such that the pulse will always occur immediately subsequent to $\tau_{\mathrm{FF}}$ (Fig.~\ref{fig:dynamic_qft_mcmrb_appendix}c). The remaining $L-1$ pulses are scheduled to be uniformly spaced during the MCM prior to FF. In our experiments, we use $L = 8$ DD pulses per sequence, which we denote in the format $P_1P_2P_3P_4P_5P_6P_7P_8$ for each $P_i \in \mathcal{G}$; thus, $P_8$ occurs after $\tau_{\mathrm{FF}}$ while $P_1$--$P_7$ are uniformly spaced before $\tau_{\mathrm{FF}}$. For example, the collision-constrained sequence is written as $IIIIIIX_pX_p$, allowing for reversed evolution of $Z$-phase errors during $\tau_{\mathrm{FF}}$, after the termination of the readout instruction and the frequency collision associated with the presence of the resonator field.

\subsection{QFT+M training and results}
\label{sec:qftmtraining_results_appendix}
For each motif, we optimize DD strategies containing sequences of length $L = 8$ and $k = 2$ colors, $K = 16$ strategies are preserved per iteration, and pulses are drawn from the decoupling group $\mathcal{G} =  \{I_p, I_m, X_p, X_m, Y_p, Y_m, Z_p, Z_m\}$. Here, $I_p = I_m$ is the identity element (notated as such for consistency) and $X_{p}, X_m$ and $Y_p, Y_m$ pulses are implemented through symmetrized unitary transformations with respective inverse directions of state rotation as described in Ref.~\cite{Vez24}. We note that $\mathcal{G}$ is not rigorously a group but behaves as such up to unimportant global phase differences in powers of $i$. We perform 10 GADD iterations per motif; the utilities (Eqn.~\ref{eqn:1norm_utility}) of each individual in the GADD population over all iterations are plotted in Fig.~\ref{fig:qftm_training_results}. Utility-maximizing DD sequences are listed in Table~\ref{table:topgaddsequences}. 

Once learning iterations are completed, the resulting DD sequences are padded to their corresponding motifs that were previously identified in the target dynamic circuit (Fig.~\ref{fig:dynamic_qft_mcmrb}a) in a quantum computational experiment. For experiments described in Sec.~\ref{sec:dd_for_qftm}, we select subchains of length up to $N=20$ from the initial chain of length $30$ and then apply DD strategies by considering which motif each qubit belonged to, then applying the strategy learned for the corresponding motif for idle periods on each respective qubit.

\begin{figure*}[t]
    \centering
    \includegraphics[width=\linewidth]{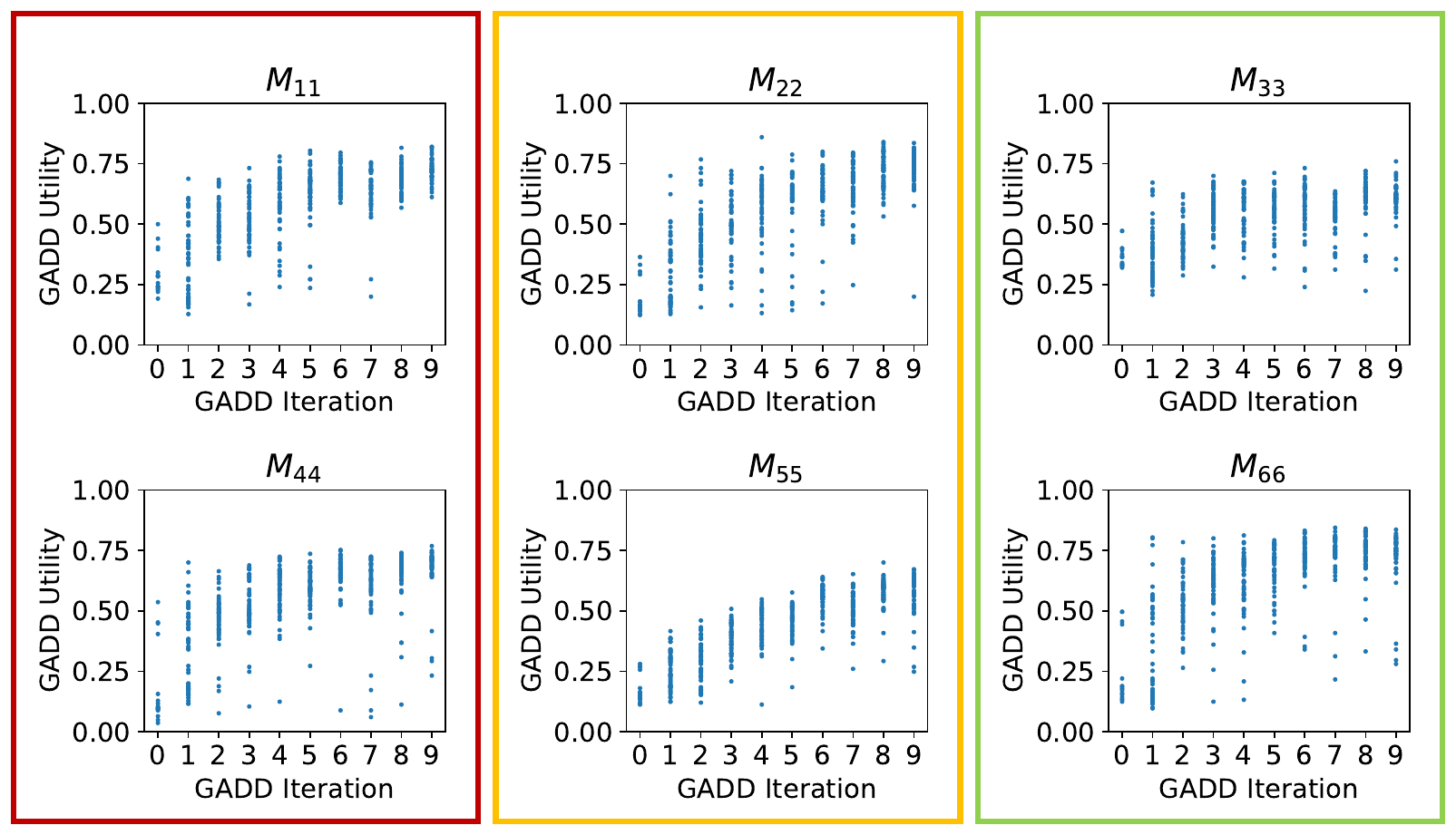}
    \caption{GADD learning utilities for each individual in the genetic algorithm population at each iteration for each of the 6 DD motifs identified from the 30-qubit QFT+M circuit. Motifs associated with the same colored box (red, yellow, green) are trained in parallel in the genetic algorithm. For each motif, the utility increases gradually until exhibiting qualitative saturation behavior, indicating convergence of the genetic algorithm to the largest achievable GADD utility among DD strategies belonging to the space of possible strategies determined by the GADD hyperparameters. We perform 10 iterations to observe such qualitative convergence in all 6 motifs.}

    \label{fig:qftm_training_results}
\end{figure*}

\begin{table}[h]
\begin{tabular}{c|c}
\hline
\textbf{DD Motif} & \textbf{Learned Sequences} \\
\hline
$M_{11}$ & \begin{tabular}{@{}c@{}} 
$X_m\ X_m\ Z_p\ Y_m\ I_m\ X_p\ Z_p\ Z_p$ \\
$Z_p\ I_m\ I_p\ X_m\ Z_p\ Y_m\ X_m\ Y_m$ 
\end{tabular} \\
\hline
$M_{22}$ & \begin{tabular}{@{}c@{}} 
$X_m\ I_m\ Y_p\ Z_p\ I_m\ X_m\ I_m\ X_m$ \\
$X_m\ I_p\ X_p\ Z_m\ I_m\ X_p\ X_m\ Z_p$ 
\end{tabular} \\
\hline
$M_{33}$ & \begin{tabular}{@{}c@{}} 
$Y_m\ Z_p\ Z_m\ Y_p\ X_p\ Y_p\ I_m\ Z_p$ \\
$X_m\ I_p\ I_m\ X_p\ I_p\ I_m\ X_p\ X_m$ 
\end{tabular} \\
\hline
$M_{44}$ & \begin{tabular}{@{}c@{}} 
$X_p\ I_p\ Y_m\ Z_m\ X_m\ I_m\ Y_p\ Z_p$ \\
$Z_p\ Y_m\ Z_m\ Y_p\ Z_p\ I_p\ X_m\ Y_m$ 
\end{tabular} \\
\hline
$M_{55}$ & \begin{tabular}{@{}c@{}} 
$Y_p\ Z_p\ Y_m\ I_p\ X_m\ X_p\ I_p\ Z_p$ \\
$Y_p\ Y_m\ I_p\ I_p\ Z_p\ Z_p\ Y_p\ Y_m$ 
\end{tabular} \\
\hline
$M_{66}$ & \begin{tabular}{@{}c@{}} 
$X_m\ Y_m\ I_m\ Z_m\ Y_m\ Z_m\ I_p\ X_p$ \\
$Y_m\ Z_m\ I_m\ Y_p\ Y_p\ I_m\ X_m\ I_m$ 
\end{tabular} \\
\hline
\end{tabular}
\caption{Utility-maximizing DD strategies for each DD motif that the GADD algorithm converges to. We select the largest utility individual in the final iteration plotted in Fig.~\ref{fig:qftm_training_results} as the learned sequences, which is representative of the sequences that the genetic algorithm converges to, as the top-performing GADD sequences are always preserved from one iteration to the next.}
\label{table:topgaddsequences}
\end{table}

\subsection{Hardware specifications}
\label{sec:hardware_specifications_appendix}
We report device metadata for data collected in Table~\ref{table:config}. All data were collected between 27-30 August 2024; we report metadata calibrated from \verb|ibm_kyiv| over those dates on relevant qubits involved in the experiments.

\begin{table}[h]
    \centering
    \begin{tabular}{|c|c|c|c|}
    \hline
        ~ & Min & Mean & Max \\
        \hline
        $T_1$ ($\mu$s) & 38.23 & 229.03 & 437.89 \\
        \hline
        $T_2$ ($\mu$s) & 16.41 & 132.11 & 403.63 \\
        \hline
        1QG error (\%) & 0.01 & 0.07 & 0.43 \\
        \hline
        2QG error (\%) & 0.78 & 2.41 & 6.49 \\
        \hline
        RO error (\%) & 0.11 & 1.14 & 15.46\\
        \hline
        1QG duration (ns) & 49.8 & 49.8 & 49.8\\
        \hline
        2QG duration (ns) & 561.8 & 568.1 & 583.1 \\
        \hline
    \end{tabular}
    \caption{Device specifications for \texttt{ibm\_kyiv} at the time of experiment execution. 1QG, 2QG, and RO denote one-qubit gate, two-qubit gate, and readout, respectively.}
\label{table:config}
\end{table}
\section{Scaling rules for dynamic circuit DD optimization}
\label{sec:scaling}
We claim that the GADD training time for a dynamic circuit split into $M$ motifs scales as
\begin{equation}
    \frac{M}{p}\cdot N_{it} \cdot T
\end{equation}
where the $M$ motifs are trained as $p$ motifs per parallel section, $N_{it}$ iterations are performed, and $T$ is time per GADD iteration, as $M/p$ represents the number of distinct GADD experiments performed. The number of iterations $N_{it}$ required for convergence is difficult to quantitatively characterize for heuristic methods such as the genetic algorithm we implement here due to the stochastic and population-based nature of the algorithm, as well as difficulty in fully understanding the utility landscape of all DD strategies and sensitivity to the various problem hyperparameters~\cite{Mit99, Gol89, Rud94, Eib03}. However, we naively expect that $N_{it}$ increases slowly with the number of degrees of freedom due to its ability to identify a principal direction of utility increase through executing reproduction between many distinct individuals per iteration. Here, the degrees of freedom represent the number of independent DD pulses being optimized, leading to the number of colors (i.e., number of unique sequences per strategy), size of the decoupling group, and number of pulses per sequence as factors contributing to $N_{it}$. We find that moderately small values of the latter two hyperparameters ($\abs{\mathcal{G}} = 8$ and $L = 8$) provide efficient optimization in Ref.~\cite{Ton24} and use such values here. For the number of DD colors, we find significant increases in the number of degrees of freedom lead to moderate slowing in convergence (as defined as number of iterations to reach a given utility threshold), as demonstrated in simulations in Appendix B of Ref.~\cite{Ton24}; furthermore, we find empirically that 10 iterations is sufficient in finding DD sequences that roughly reach the GADD utility maximum in our experiments shown in Fig.~\ref{fig:qftm_training_results}. Finally, the time per iteration $T$ is linear in the execution time per parallelized motif block, corresponding to the number of dynamic circuit layers optimized per motif, as well as the number of shots per circuit executed in the genetic algorithm, which relates to the signal-to-noise threshold of disambiguating DD strategies with different performance in the utility function. Here, we use motifs composed of 5 dynamic circuit layers and perform $250$ shots per circuit, corresponding to a maximum Bernoulli variance of $\frac{0.5^2}{250} = 10^{-3}$. Finally, $T$ also scales linearly with genetic algorithm population size; for which we find $K = 16$ to be a good hyperparameter. Further discussion on hyperparameter optimization is present in Ref.~\cite{Ton24} as well as in earlier simulations in Ref.~\cite{Qui13}.

With the aforementioned hyperparameters, we performed $N_{it} = 10$ GADD iterations for each of $M/p = 6/2 = 3$ parallelized experiments with time $T = 16.92$ seconds per iteration of quantum device usage time, for a total of 507.6 seconds. We note that the classical processing time per iteration required for the genetic algorithm to evaluate and generate new sequences is negligible compared to $T$. Such training overhead associated with GADD achieves a reduction in MCM error rates by a factor of two to three across benchmarked qubit pairs on the IBM Eagle device, enabling us to achieve over ten-fold increases in QFT+M $\fp$ and high signal-to-noise in the QFT+M of entangled quantum states. Such error rate reductions also lead to alleviations in overhead for methods in quantum error mitigation such as probabilistic error cancellation, where the cost to achieve a given error rate is exponential in the number of sampled quantum circuits~\cite{Tem17,Cai23}.

\section{MCM-RB}
\label{sec:mcmrb_results_appendix}
We perform mid-circuit measurement randomized benchmarking to characterize the behavior of empirically learned DD sequences using the \verb|mcm-rb| subroutine circuits presented in Ref.~\cite{Gov23}. We select a sample of circuit layers from each of the 6 circuit registers $R_i$ for which we performed a unique empirical learning experiment; such layers were selected to be representative of the frequency of measurement-induced Z-phase and collisional errors throughout the entire training circuit. MCM-RB experiments were performed in parallel for qubit registers $R_1, R_3, R_5$ and $R_2, R_4, R_6$ (Fig.~\ref{fig:dynamic_qft_mcmrb}a) respectively. Sixty MCM-RB random circuits with $l$ layers for $l = [2,4,6,8,10,12]$ were generated. At each circuit layer, three qubits corresponding to the measured qubits in sampled circuit layers for each $R_i$ in the group of three qubit registers experience MCMs. The feedforward time is padded subsequent to each measurement layer, followed by random Clifford operations on all unitary qubits. Feedforward time is added by inserting an idle period with length prescribed by the calibrated feedforward time in Ref.~\cite{Bau24} and padding sequences with feedforward compensation with the same pulse timings to replicate the behavior of such sequences in the MCM-RB application circuit. The same set of $60$ random circuits per depth were used for both sets of qubit registers, as well as for the experiments with and without DD. All experiments were executed on \verb|ibm_kyiv| immediately subsequent to training.

We present the raw MCM-RB results for each measured-unitary pair, obtained by marginalizing the outcome distribution over all measured/unitary qubits present in a given experiment, in Fig.~\ref{fig:mcmrbcf1} and Fig.~\ref{fig:mcmrbcf2} for the two parallel experiments respectively. Upon collecting these data, we observe two qualitatively distinct behaviors, as expected given the conclusions from Ref.~\cite{Gov23}. We find that most qubit pairs qualitatively exhibit a significant EPL decrease of the unitary qubit upon application of GADD sequences, while the measured qubit exhibits small EPL relative to the unitary qubit and does not significantly change upon addition of DD, corresponding to the qualitative behavior associated with the measurement-induced Z-phase error discussed in Ref.~\cite{Gov23}. On the other hand, in individual cases, both measured and unitary qubits have nontrivial and similar EPL, indicative of measurement-induced two-qubit error caused by frequency collision. In this case, the unitary qubits experience similar EPL between GADD and no DD, as well as the collision-constrained DD sequence (Appendix~\ref{sec:applying_dd_to_training_appendix}), which mostly consists of identity operations.

\begin{figure}[h]
    \centering
    \includegraphics[width=\linewidth]{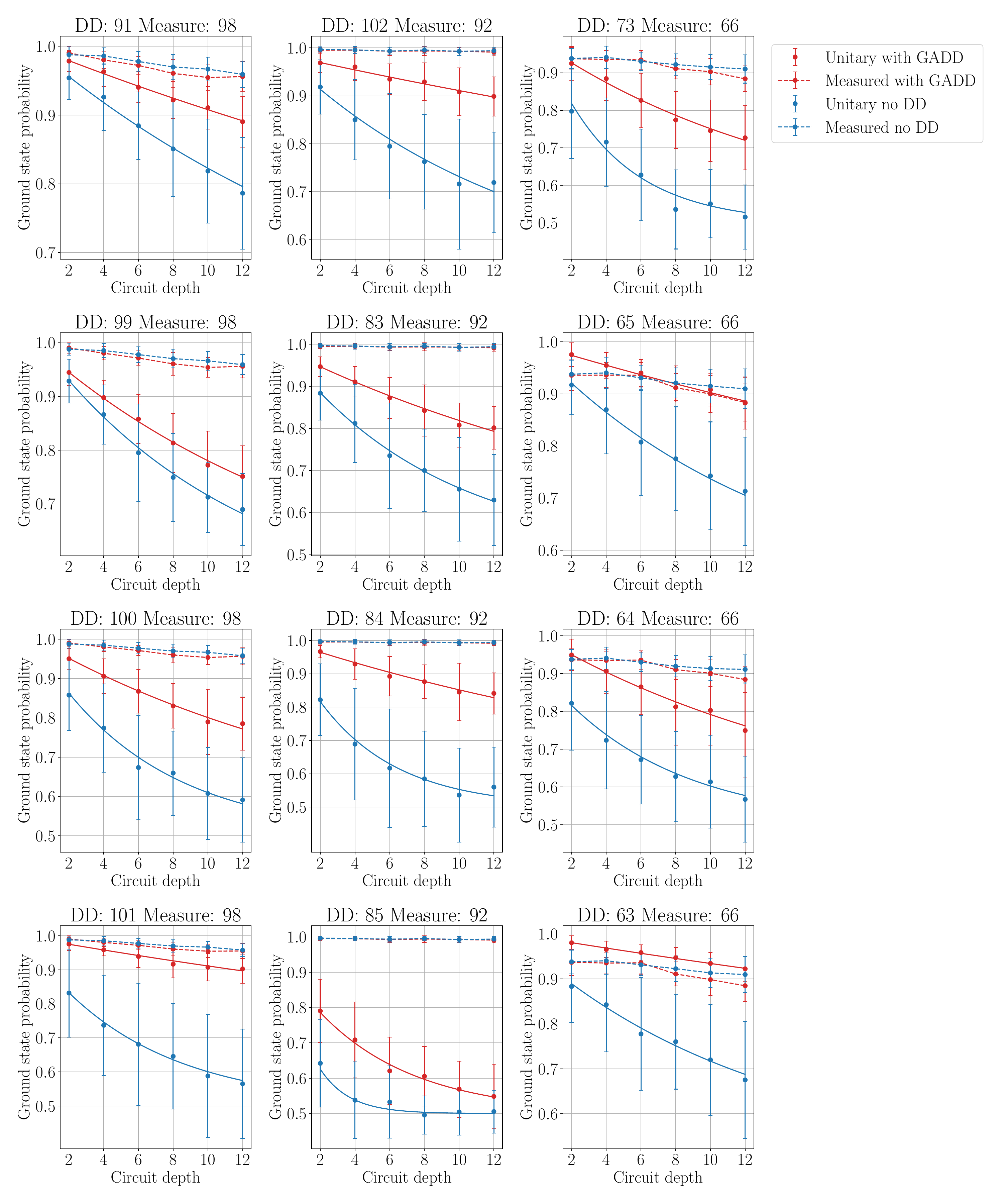}
    \caption{MCM-RB results on the first set of parallelized qubit registers, $R_1$, $R_3$, $R_5$, with (red) and without (blue) the application of empirically learned dynamical decoupling sequences. Statistics are derived from bootstrap resampling of $30$ random circuits from the ensemble of $60$ random circuits and error bars represent $1\sigma$ fluctuations in the mean $P(0)$ as calculated over resampled subensembles. MCM-RB data associated with unitary qubits are fit to an exponential decay for EPL calculation and the fit is shown in the solid line on each plot. The respective measured qubit data are not fit and are shown with the dashed line.}
    \label{fig:mcmrbcf1}
\end{figure}

\begin{figure}[h]
    \centering
    \includegraphics[width=\linewidth]{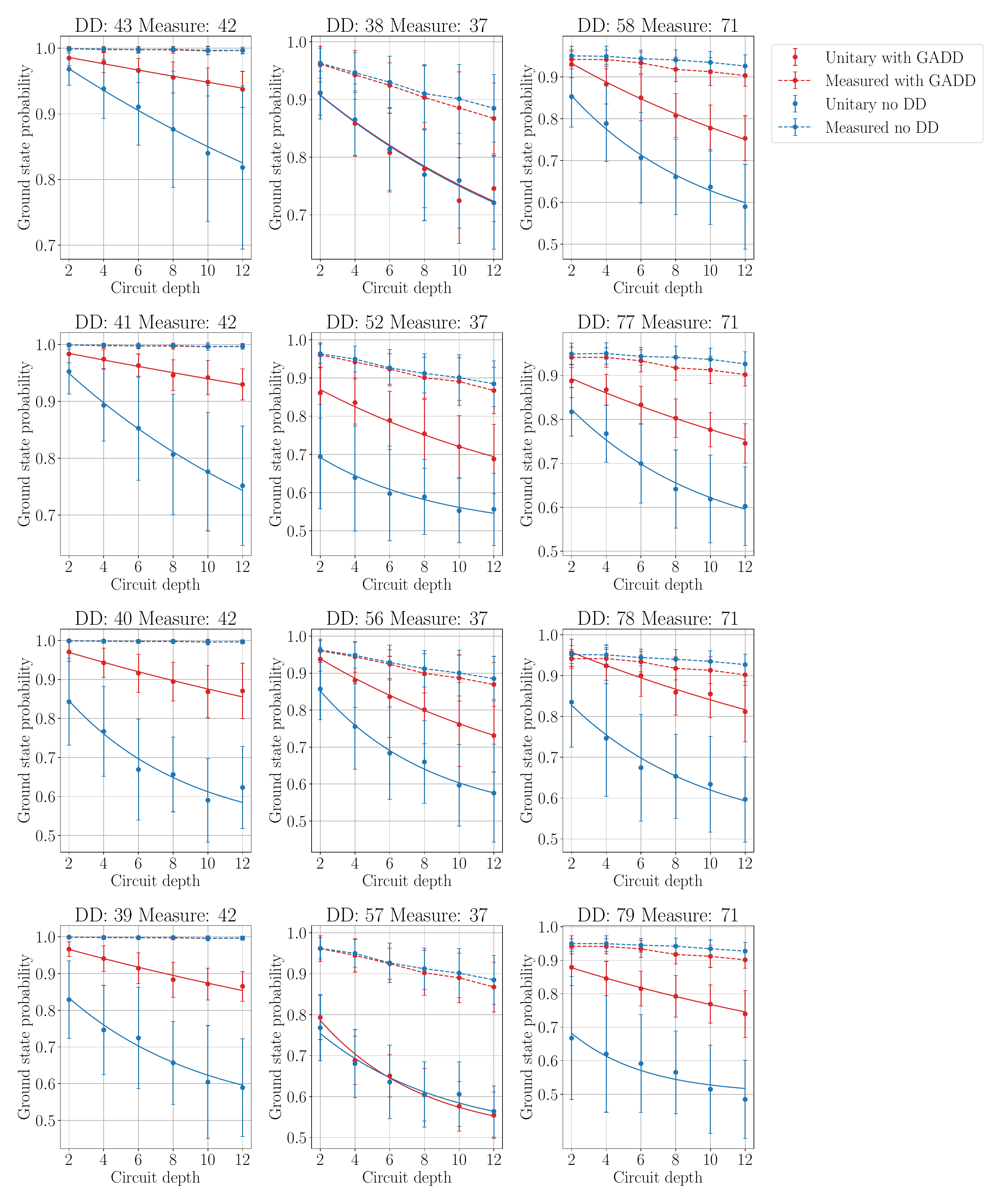}
    \caption{Results of MCM-RB experiments performed on the second set of parallelized qubit registers, $R_2$, $R_4$, $R_6$, analogous to those presented in Fig.~\ref{fig:mcmrbcf1}.}
    \label{fig:mcmrbcf2}
\end{figure}

\section{Dynamic circuits randomized benchmarking results}\label{sec:dcrb_results_appendix}

For DC-RB training, the utility metric used is the mean probability of measuring the unitary qubit in $\ket{0}$. After learning the best sequences using the motifs, we applied full DC-RB experiments with lengths $[0,1,2,3,4,5,10,15,20,35]$ and 7 randomizations per length; each circuit had 300 shots. 

We provide the full results for the data shown in Fig.~\ref{fig:dcrb_results}a and Fig.~\ref{fig:dcrb_results}b.

\begin{figure}[h]
    \centering
    \includegraphics[width=\linewidth]{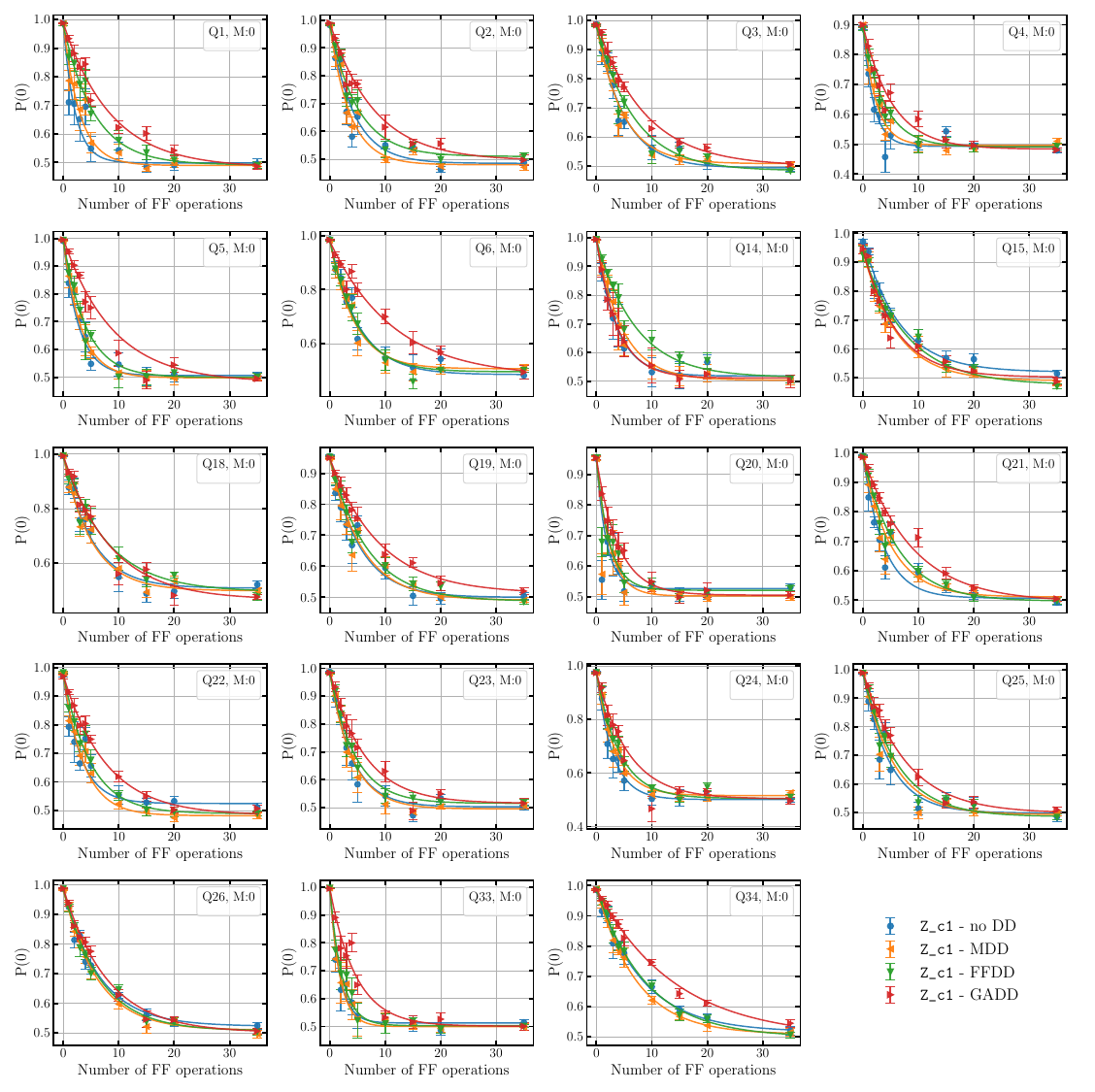}
    \cprotect\caption{Dynamic circuits randomized benchmarking results for the \verb|Z_c1| block. The extracted EPL are shown in Fig.~\ref{fig:dcrb_results}a. The protocol was applied to all qubits in parallel and the different DD sequences used the same 7 randomization circuits per length.}
    \label{fig:dcrb_curvesZ}
\end{figure}

\begin{figure}[h]
    \centering
    \includegraphics[width=\linewidth]{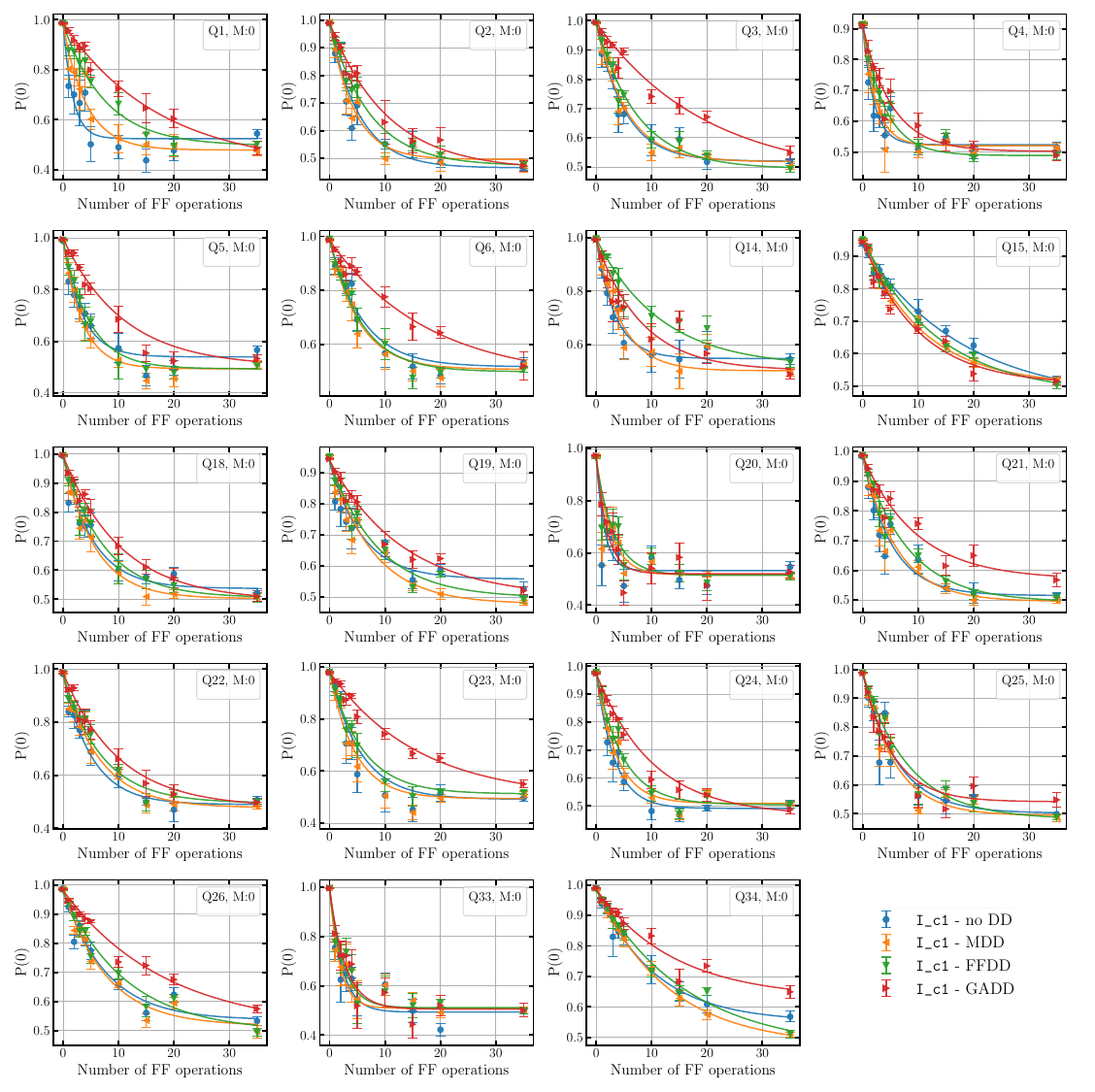}
    \cprotect\caption{Dynamic circuits randomized benchmarking results for the \verb|I_c1| block. The extracted EPL are shown in Fig.~\ref{fig:dcrb_results}b. The protocol was applied to all qubits in parallel and the different DD sequences used the same 7 randomization circuits per length.}
    \label{fig:dcrb_curvesI}
\end{figure}
\pagebreak
\section{QFT+M of X-basis GHZ states with local $R_z(\pi)$ rotations}
\label{sec:qft_on_ghz_full}
\begin{theorem}
    Let 
    \begin{equation}
        \begin{gathered}
        \ket{\Psi_m} = \frac{1}{2^{(N+1)/2}}\Bigg(\underbrace{(\ket{0} + \ket{1})\dots(\ket{0} + \ket{1})}_{N-m-1}(\ket{0} - \ket{1})\underbrace{(\ket{0} + \ket{1})\dots(\ket{0} + \ket{1})}_{m} \\ + \underbrace{(\ket{0} - \ket{1})\dots(\ket{0} - \ket{1})}_{N-m-1}(\ket{0} + \ket{1})\underbrace{(\ket{0} - \ket{1})\dots(\ket{0} - \ket{1})}_{m}\Bigg)
        \end{gathered}
    \end{equation}
    denote the $X$-basis GHZ state on $N$ qubits with $R_z(\pi)$ rotation on qubit $m$. We claim that 
    \begin{equation}
        0 \leq \abs{\bra{2^{N-m-1}}\text{QFT}\ket{\Psi_m}}^2 - \frac{2}{\pi^2} \leq \epsilon
    \end{equation}
for $\epsilon > 0, \epsilon \sim \mathcal{O}(2^{-2m})$.
\end{theorem}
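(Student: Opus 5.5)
The plan is to compute the amplitude $\bra{2^{N-m-1}}\text{QFT}\ket{\Psi_m}$ in closed form by exploiting the product structure of both branches of $\ket{\Psi_m}$, and then to bound the resulting trigonometric expression.

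\textbf{Conventions.} I label qubits so that the $j$-th tensor factor counted from the right carries bit weight $2^j$, so that $x=\sum_j x_j 2^j$. I take $\text{QFT}\ket{x} = 2^{-N/2}\sum_k \omega^{xk}\ket{k}$ with $\omega=e^{2\pi i/2^N}$; the opposite sign convention only conjugates everything and does not change moduli.

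\textbf{Step 1: product form of the amplitude.} I write $\ket{\Psi_m}=(\ket{\psi_a}+\ket{\psi_b})/\sqrt2$ with
\[
\ket{\psi_a}=2^{-N/2}\sum_x(-1)^{x_m}\ket{x},\qquad \ket{\psi_b}=2^{-N/2}\sum_x(-1)^{|x|+x_m}\ket{x}.
\]
For a product-sign state $\sum_x\prod_j s_j^{x_j}\ket{x}$, the amplitude at $k$ factorizes:
\[
\sum_x \omega^{xk}\prod_j s_j^{x_j}=\prod_{j=0}^{N-1}\bigl(1+s_j\,\omega^{k2^j}\bigr).
\]
For $k=2^{N-m-1}$ one has $\omega^{k2^j}=e^{i\pi 2^{j-m}}$. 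This equals $1$ for $j>m$, equals $-1$ for $j=m$, and equals $e^{i\pi/2^{m-j}}$ for $j<m$.

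\textbf{Step 2: the second branch drops out.} In $\ket{\psi_b}$ the $j=m$ sign is $s_m=+1$, so its $j=m$ factor is $1+(+1)(-1)=0$, and $\ket{\psi_b}$ contributes nothing. In $\ket{\psi_a}$ the factors are:
\begin{itemize}
\item $j=m$: $1+(-1)(-1)=2$;
\item $j>m$: $1+1=2$ each, giving $2^{N-m-1}$ in total;
\item $j<m$: $1+e^{i\theta_t}$ with $\theta_t=\pi/2^t$ for $t=m-j=1,\dots,m$.
\end{itemize}
Each remaining factor has modulus $2\cos(\theta_t/2)$. The telescoping identity $\prod_{t=1}^m\cos(\pi/2^{t+1})=\bigl(2^m\sin(\pi/2^{m+1})\bigr)^{-1}$ follows by repeatedly applying $\sin 2y=2\sin y\cos y$. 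Hence the $j<m$ product has modulus $1/\sin(\pi/2^{m+1})$.

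\textbf{Step 3: assemble.} Collecting the normalizations $2^{-1/2}\cdot2^{-N/2}\cdot2^{-N/2}$ gives
\[
\abs{\bra{2^{N-m-1}}\text{QFT}\ket{\Psi_m}}^2=\frac{1}{2\cdot4^m\sin^2(\pi/2^{m+1})}=\frac{2}{\pi^2}\Bigl(\frac{x}{\sin x}\Bigr)^2,\qquad x=\frac{\pi}{2^{m+1}}.
\]
The lower bound $\geq 2/\pi^2$ is immediate from $\sin x\le x$. For the upper bound I use $x/\sin x = 1+x^2/6+O(x^4)$ on $(0,\pi/2]$. This gives
\[
\abs{\bra{2^{N-m-1}}\text{QFT}\ket{\Psi_m}}^2-\frac{2}{\pi^2}\le\frac{2}{\pi^2}\cdot c\,x^2=\mathcal{O}(2^{-2m}),
\]
where $c$ is a uniform constant on $(0,\pi/2]$, since $\bigl((x/\sin x)^2-1\bigr)/x^2$ is bounded there. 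As a sanity check, at $m=0$ the value is $1/2$.

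\textbf{Main obstacle.} I expect the main difficulty to be bookkeeping rather than analysis: fixing the qubit-ordering and QFT sign conventions consistently with the statement. Under a mismatched convention the peak would sit at the bit-reversed or mirrored index $2^N-2^{N-m-1}$; by the symmetry $k\mapsto -k$, that index has the same modulus. This is also why the paper's SNR sums both peaks. I would state the convention explicitly and remark on this symmetry.
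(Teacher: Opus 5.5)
Your proof is correct and reaches the paper's closed form, $\frac{1}{2^{2m+1}}\csc^2(\pi/2^{m+1})$, by a different route.

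The paper works with the combined state. It adds the two branches in the computational basis and notes that odd-weight strings cancel, so $\ket{\Psi_m}=2^{-(N-1)/2}\sum_{s\in\mathcal{S}}(-1)^{w_m(s)}\ket{s}$ over the even-weight set $\mathcal{S}$. It then absorbs $(-1)^{w_m(s)}$ into the Fourier phase. A counting argument shows that each value of the low $m$ bits occurs $2^{N-m-1}$ times in $\mathcal{S}$, so the amplitude collapses to a single geometric series $\sum_{r=0}^{2^m-1}e^{2\pi i r/2^{m+1}}$.

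You instead keep the branches as separate product states and use the bitwise factorization of the QFT amplitude. The second branch then vanishes identically through its bit-$m$ factor. The first reduces to $\prod_{t=1}^{m}(1+e^{i\pi/2^t})$, which you evaluate with the telescoping double-angle identity rather than a geometric sum.

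What your route buys:
\begin{itemize}
\item It avoids the parity bookkeeping, which the paper states somewhat loosely (e.g.\ it writes $s_m=s-2^m$ where $s\bmod 2^m$ is meant).
\item It makes explicit that only one branch contributes at this index.
\item It gives the amplitude at any $k$ as a sum of two explicit products.
\item It mirrors the qubit-by-qubit structure of the semiclassical QFT.
\end{itemize}
The paper's route, in turn, exposes the even-parity support of $\ket{\Psi_m}$ and needs no product formula.

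Your final estimates are also more self-contained. You use $\sin x\le x$ for the lower bound, and the boundedness of $\bigl((x/\sin x)^2-1\bigr)/x^2$ on $(0,\pi/2]$ for a uniform constant in the upper bound. The paper instead appeals to the Laurent series of $\csc^2$, which tacitly requires that all its coefficients are positive and that the series converges at $\theta=\pi/2^{m+1}\le\pi/2$.
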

\begin{proof}
    For any computational basis state $0\leq s \leq 2^N - 1$, let $w_i(s)$ denote the value of the bit at the $2^i$ place value in the binary representation of $s$ for $0 \leq i \leq N - 1$. Then, 
    \begin{equation}
        \ket{\Psi_m} = \frac{1}{2^{(N+1)/2}}\sum_{s = 0}^{2^N-1} (-1)^{w_m(s)}\ket{s} + (-1)^{W(s) - w_m(s)}\ket{s}
    \end{equation}
    where $W(s) = \sum_{i=0}^{N-1} w_i(s)$. Note $(-1)^{w_m(s)} = (-1)^{W - w_m(s)}$ if and only if $W(s) - 2w_m(s)$ is even; otherwise, $(-1)^{w_m(s)} + (-1)^{W(s) - w_m(s)} = 0$. Since $w_m(s)$ only takes on the values of $0$ and $1$, this occurs if and only if $W(s)$ is even. \\

    Let $\mathcal{S}$ denote the set of all $s$ with even $W(s)$. Note that, as there is a one-to-one correspondence between computational basis states with even $W(s)$ and odd $W(s)$ through flipping the least significant bit, $\abs{\mathcal{S}} = 2^{N-1}$. Thus, 
    \begin{equation}
        \ket{\Psi_m} = \frac{1}{2^{(N-1)/2}}\sum_{s\in\mathcal{S}}(-1)^{w_m(s)}\ket{s} 
    \end{equation}
    We aim to bound
    \begin{equation}
        \abs{\bra{2^{N-m-1}}\text{QFT}\ket{\Psi_m}}^2 = \frac{1}{2^{N-1}}\abs{\sum_{s\in\mathcal{S}}(-1)^{w_m(s)}\bra{2^{N-m-1}}\text{QFT}\ket{s}}^2
    \end{equation}
    The matrix element $\bra{2^{N-m-1}}\text{QFT}\ket{s}$ is $\frac{1}{2^{N/2}}e^{2\pi is\cdot2^{N-m-1}/2^N} = \frac{1}{2^{N/2}}e^{2\pi is/2^{m+1}}$. Therefore,
    \begin{equation}
        \frac{1}{2^{N-1}}\abs{\sum_{s\in\mathcal{S}}(-1)^{w_m(s)}\bra{2^{N-m-1}}\text{QFT}\ket{s}}^2 = \frac{1}{2^{2N-1}}\abs{\sum_{s\in\mathcal{S}}(-1)^{w_m(s)}e^{2\pi is/2^{m+1}}}^2
    \end{equation}
    We note that $(-1)^{w_m(s)} = (e^{\pi i})^{w_m(s)} = e^{2\pi i\cdot2^mw_m(s)/2^{m+1}}$. Therefore,
    \begin{equation}
        \frac{1}{2^{2N-1}}\abs{\sum_{s\in\mathcal{S}}(-1)^{w_m(s)}e^{2\pi is/2^{m+1}}}^2 = \frac{1}{2^{2N-1}}\abs{\sum_{s\in\mathcal{S}}e^{2\pi i(s + 2^mw_m(s))/2^{m+1}}}^2
    \end{equation}
    Let $s_m = (s + 2^mw_m(s)) \bmod{2^{m+1}}$ and note $e^{2\pi i(s + 2^mw_m(s))/2^{m+1}} = e^{2\pi i s_m/2^{m+1}}$. Furthermore, let $s_+ = s - s_m$. Then, for any fixed $s_+$, when $w_m(s) = 0$, all possible length $m$ bitstrings of a fixed parity (e.g., either even or odd $\sum_{i=0}^{m-1} w_{i}(s)$) are attained for $s_m$. In contrast, when $w_m(s) = 1$, $s_m = s - 2^m$, which attains all possible length $m$ bitstrings of the opposite parity. Therefore, 
    \begin{equation}
    \frac{1}{2^{2N-1}}\abs{\sum_{s\in\mathcal{S}}e^{2\pi i(s + 2^mw_m(s))/2^{m+1}}}^2 = \frac{1}{2^{2N-1}}\abs{\frac{2^{N-1}}{2^m}\sum_{s_m=0}^{2^m-1} e^{2\pi is_m/2^{m+1}}}^2 = \frac{1}{2^{2m+1}}\abs{\sum_{s_m=0}^{2^m-1} e^{2\pi is_m/2^{m+1}}}^2
    \end{equation}
    where we attain the prefactor of $\frac{\abs{\mathcal{S}}}{2^m}$ corresponding to the number of distinct values of $s_+$. Per the sum of a finite geometric series, \begin{equation}
         \frac{1}{2^{2m+1}}\abs{\sum_{s_m=0}^{2^m-1} e^{2\pi is_m/2^{m+1}}}^2 = \frac{4}{2^{2m+1}}\frac{1}{\abs{1-e^{2\pi i/2^{m+1}}}^2} = \frac{1}{2^{2m+1}}\csc^2{\frac{\pi}{2^{m+1}}}
    \end{equation} 
    Per the Laurent series expansion of $\csc^2{\theta}$ about $\theta = 0$,
    \begin{equation}
         \frac{1}{2^{2m+1}}\csc^2{\frac{\pi}{2^{m+1}}} = \frac{2}{\pi^2} + \mathcal{O}(2^{-2m})
    \end{equation}
    where all subdominant terms are positive and polynomial in $2^{-2m}$ as desired. 
\end{proof}
\pagebreak
\section{QFT on GHZ experimental results}
\label{sec:qft_ghz_expt_all_results}  
\begin{figure}[h!]
    \centering
    \includegraphics[width=0.9\linewidth]{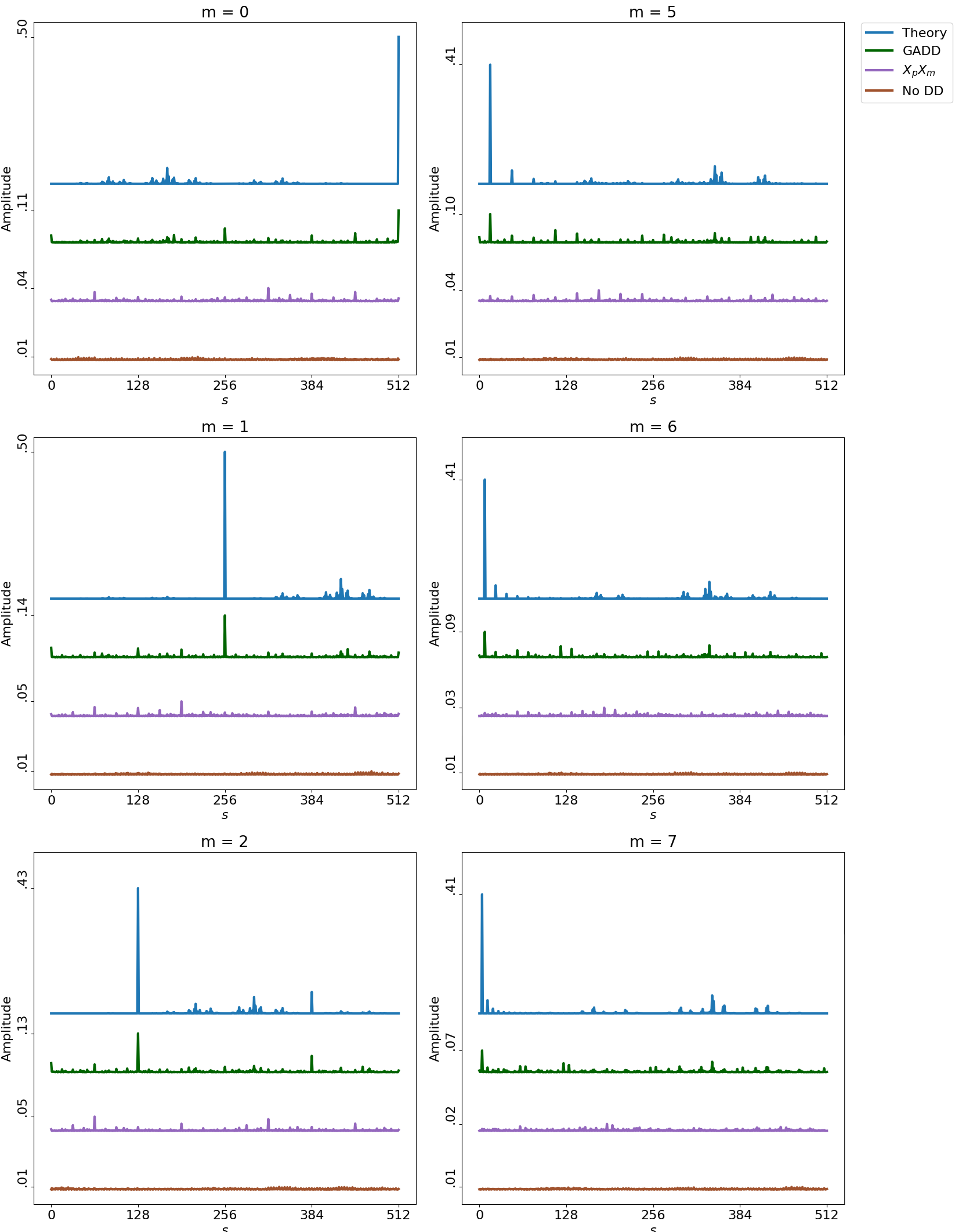}
    
\end{figure}

\renewcommand{\thefigure}{13}
\begin{figure}[t!]\ContinuedFloat
    \centering
    \includegraphics[width=0.8\linewidth]{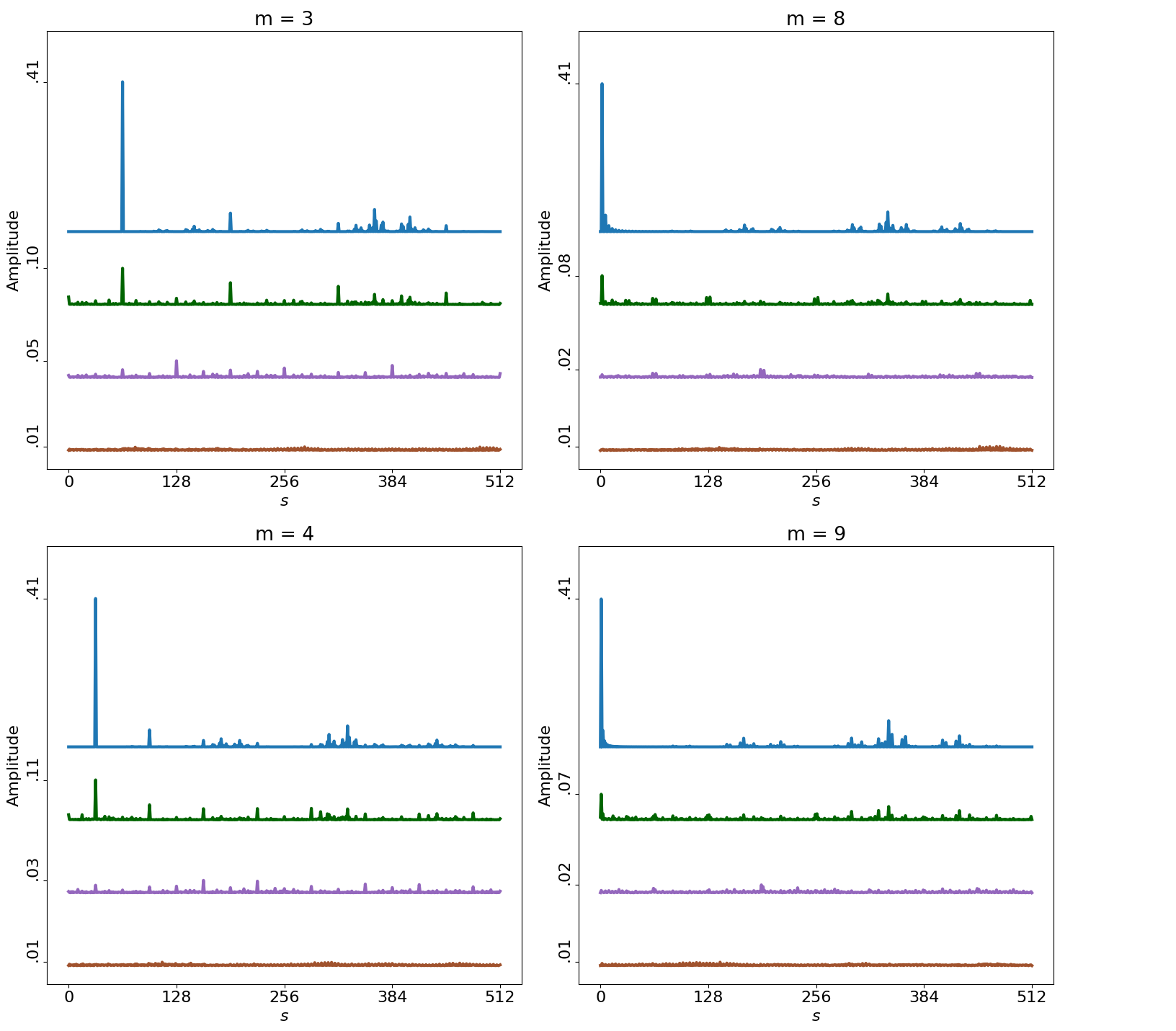}
    \caption{Raw counts measured over 10000 shots of the experiment described in Sec.~\ref{sec:qft_ghz_results} for computing $\text{QFT}\ket{\Psi_m}$ through the semiclassical QFT+M. We plot here the experimentally measured amplitude $\frac{1}{2^{10}}|(\bra{s} + \bra{2^{10} - s})\text{QFT}\ket{\Psi_m}|^2$ when GADD sequences are applied (green), compared to $X_pX_m$ staggered (purple) and no DD (brown) are applied. $X_pX_m$ staggered is implemented in the same uniform, FF-compensated fashion as the GADD sequences are applied. Experimental results are measured against the theoretically exact expected outcome distribution (blue).}
    \label{fig:qft_ghz_all_results} 
\end{figure}

\renewcommand{\thefigure}{14}
\begin{figure}
    \centering
    \includegraphics[width=0.5\linewidth]{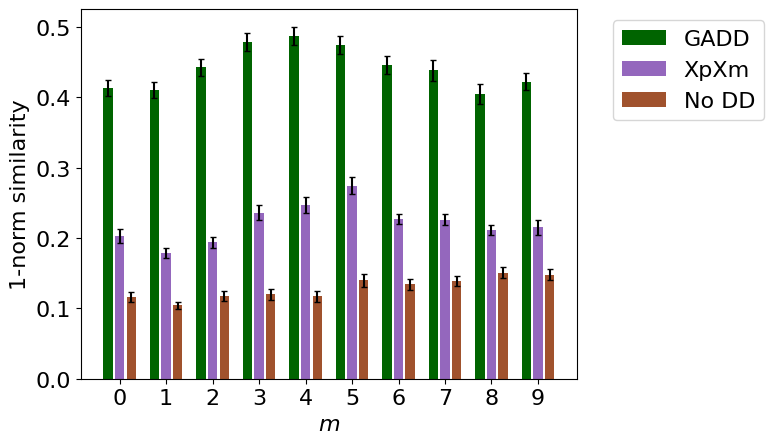}
    \caption{1-norm similarity (Eqn.~\ref{eqn:1norm_utility}) between each experimental data series (GADD, $X_pX_m$, and no DD) and the theoretical prediction; a value of 1 occurs if and only if the distributions are exactly the same, while a completely uniform distribution will achieve a value of near 0.}
    \label{fig:qft_onenorm_results}
\end{figure}

\end{document}